\documentclass[runningheads,a4paper]{llncs}

\usepackage{amssymb}
\usepackage{graphicx}
\usepackage{url}
\usepackage{amsmath}
\usepackage{tikz}
\usetikzlibrary{calc}
\usetikzlibrary{arrows.meta,positioning}

\begin{document}

\mainmatter 

\title{Bit-counting complexity classes}

\titlerunning{Bit-counting complexity classes}

\author{ Tayfun Pay}

\authorrunning{Tayfun Pay}

\urldef{\mailsa}\path|tpay@gradcenter.cuny.edu|

\institute{{ \mailsa}}



%
%

\maketitle

\begin{abstract}

We define bit-counting complexity classes, where the membership depends on the binary profile of the number of accepting paths of non-deterministic polynomial time Turing machines. We study the relationship between this new family of complexity classes and the classical complexity classes. We prove that the complexity class ${\bf PP}$ is contained in our comparison based bit-counting complexity classes ${\bf B_{|0|=|1|}P}$, ${\bf B_{|0|<|1|}P}$ and ${\bf B_{|0|>|1|}P}$. We further show that all of these complexity classes are Turing equivalent ${\bf P}^{\bf PP} = {\bf P}^{{\bf B_{|0|=|1|}P}}={\bf P}^{{\bf B_{|0|>|1|}P}}={\bf P}^{{\bf B_{|0|<|1|}P}}$. We also prove that complexity classes ${\bf NP}$ and ${\bf CoNP}$ are contained in both of our parity based bit-counting complexity classes ${\bf B_{|0| \oplus}P}$ and ${\bf B_{|1| \oplus}P}$.
\end{abstract}

\section{Introduction}

We introduce a family of bit-counting complexity classes defined by applying predicates to the binary representation of ${\bf \#P}$ functions. Instead of comparing the number of accepting computation paths directly to a value, this family of complexity class examine the number of 0's bits and 1's bits that appear in the binary expansion of the accepting path count. Our bit-counting complexity classes provide a different perspective on counting complexity such that rather than asking whether a count is small, large, zero, odd, even or exactly equal to a threshold, they ask whether the binary description of the count satisfies some bit-counting property.

We first obtain equality among some classical complexity classes and bit-counting complexity classes purely based on definition. These are the bit-counting complexity classes that compare the count of either 0's bits or 1's bits to the value zero. These bit-counting complexity classes and their classical counterparts are as follows: ${\bf B_{|1|=0}P}={\bf CoNP}$, 
${\bf B_{|1|>0}P}={\bf NP}$, 
${\bf B_{|0|=0}P}={\bf MNS} = {\bf C_{=}P}$ and 
${\bf B_{|0|>0}P}={\bf CoMNS} = {\bf CoC_{=}P}$.

We then consider comparison based bit-counting complexity classes that compare the bit count's of 0's bits and 1's bits as follows: ${\bf B_{|0|=|1|}P}$, ${\bf B_{|0|>|1|}P}$ and ${\bf B_{|0|<|1|}P}$. We prove that the comparison based bit-counting complexity classes are strong enough to contain ${\bf PP}$. We show that ${\bf PP}\subseteq{\bf B_{|0|=|1|}P}$, ${\bf PP}\subseteq{\bf B_{|0|>|1|}P}$ and ${\bf PP}\subseteq{\bf B_{|0|<|1|}P}$, using explicit ${\bf \#P}$ constructions that force the binary representation of the resulting count to have the desired balance between 0's bits and 1's bits. We also show that these comparison based bit-counting complexity classes are contained in ${\bf P}^{\bf PP}$ and that their polynomial-time Turing closures coincide with ${\bf P}^{\bf PP}$, such that ${\bf P}^{\bf PP} = {\bf P}^{{\bf B_{|0|=|1|}P}}={\bf P}^{{\bf B_{|0|>|1|}P}}={\bf P}^{{\bf B_{|0|<|1|}P}}$. 

We finally consider parity based bit-counting complexity classes as follows: ${\bf B_{|0|\oplus}P}$ and ${\bf B_{|1|\oplus}P}$. We establish the bounds ${\bf NP}\cup{\bf coNP}\subseteq{\bf B_{|0|\oplus}P}$, ${\bf NP}\cup{\bf coNP}\subseteq{\bf B_{|1|\oplus}P}$,${\bf B_{|0|\oplus}P}\subseteq{\bf P}^{\bf PP}$ and ${\bf B_{|0|\oplus}P}\subseteq{\bf P}^{\bf PP}$. We also prove that ${\bf B_{|1| \oplus}P}\subseteq {\bf P}^{{\bf B_{|0| \oplus}P}}$ and ${\bf B_{|0| \oplus}P}\subseteq {\bf P}^{{\bf B_{|1| \oplus}P}}$, and then prove that their polynomial-time Turing closures agree, such that ${\bf P}^{{\bf B_{|0|\oplus}P}}={\bf P}^{{\bf B_{|1|\oplus}P}}$.

These results place our family of bit-counting complexity classes within the landscape of complexity hierarchy and show that predicates applied to the the binary expansion of the accepting path count can encode several classical non-deterministic and probabilistic counting phenomena.

\section{Some classical complexity classes}

\begin{definition}\normalfont
A language $L$ is in complexity class {\bf NP}, if there exists a polynomial $p$ and a polynomial-time predicate $R$ such that, for each $x$,

$x \in L \Leftrightarrow ||\{y| \ |y| = p(|x|) \wedge R(x, y)\}|| >0 $
\end{definition}

\begin{definition}\normalfont
A language $L$ is in complexity class {\bf CoNP}, if there exists a polynomial $p$ and a polynomial-time predicate $R$ such that, for each $x$,

$x \in L \Leftrightarrow ||\{y| \ |y| = p(|x|) \wedge R(x, y)\}|| =0 $
\end{definition}

\begin{definition}\normalfont
A language $L$ is in complexity class {\bf US}, as defined in \cite{BG82}, if there exists a polynomial $p$ and a polynomial-time predicate $R$ such that, for each $x$,

$x \in L \Leftrightarrow ||\{y| \ |y| = p(|x|) \wedge R(x, y)\}|| = 1 $
\end{definition}

\begin{definition}\normalfont
A language $L$ is in complexity class {\bf $\oplus$P}, as defined in \cite{PZ83}, if there exists a polynomial $p$ and a polynomial-time predicate $R$ such that, for each $x$,

$x \in L \Leftrightarrow ||\{y| \ |y| = p(|x|) \wedge R(x, y)\}|| \not \equiv$ {\rm 0 (Mod 2)}
\end{definition}

\begin{definition}\normalfont
A language $L$ is in complexity class {\rm \bf C$_{=}$P}, as defined in \cite{S75}, if there exists a polynomial $p$ and a polynomial-time predicate $R$ such that, for each $x$,
$x \in L \Leftrightarrow ||\{y| \ |y| = p(|x|) \wedge R(x, y)\}|| = 2^{p(|x|)-1} $
\end{definition}

\begin{definition}\normalfont
A language $L$ is in complexity class {\rm \bf ES}, as defined in \cite{BHR00}, if there exists a polynomial $p$ and a polynomial-time predicate $R$ such that, for each $x$,

$x \in L \Leftrightarrow ||\{y| \ |y| = p(|x|) \wedge R(x, y)\}|| = 2^{t}$, where t $\in \mathbb{N}_{0} = \{0,1,2, ...\}$
\end{definition}

\begin{definition}\normalfont A language $L$ is in complexity class {\bf MNS}, as defined in \cite{CP18}, if there exists polynomial $p$ and a polynomial-time predicate $R$ such that, for each $x$,

$x \in L \Leftrightarrow ||\{y| \ |y| = p(|x|) \wedge R(x, y)\}|| = 2^{t}-1$, where t $\in \mathbb{N}_{>0} = \{1,2,3, ...\}$
\end{definition}

\begin{definition}\normalfont
A language $L$ is in complexity class {\rm \bf PP}, as defined in \cite{S75}, if there exists a polynomial $p$ and a polynomial-time predicate $R$ such that, for each $x$,

$x \in L \Leftrightarrow ||\{y| \ |y| = p(|x|) \wedge R(x, y)\}|| > 2^{p(|x|)-1} $

\end{definition}

\begin{definition}\normalfont
Functional complexity class {\rm \bf \#P}, as defined in \cite{V79}, counts the total number of accepting paths of a non-deterministic polynomial time Turing machine.

{\bf \#P} = $\{f | (\exists$ a non-deterministic polynomial time Turing machine $M) (\forall x)$ $[f (x) = \#accept_{M}(x)]\}$.
\end{definition}

It was shown that ${\bf CoNP} \subseteq {\bf US} \subseteq {\bf MNS} = {\bf ES } = { \bf C_{=}P} \subseteq {\bf PP}$ and that ${\bf P}^{\bf PP} = {\bf P}^{{\bf\#P}}$. Many other complexity classes can be found in \cite{HO02} and \cite{CP18}.

\section{Bit-counting complexity classes}

The function $B_{0}$ counts the number of 0's bits and the function $B_{1}$ counts the number of 1's bits in the binary representation of the numbers in the following definitions. 

\begin{definition}\normalfont
A language $L$ is in complexity class ${\bf B_{|0|=|1|}P}$, if there exist a polynomial $p$ and a polynomial-time predicate $R$ such that, for each $x$, 

$x \in L \Leftrightarrow B_{0}(||\{y| \ |y| = p(|x|) \wedge R(x, y)\}||)$ $\equiv$ $B_{1}(||\{y| \ |y| = p(|x|) \wedge R(x, y)\}||)$
\end{definition}

\begin{definition}\normalfont
A language $L$ is in complexity class ${\bf B_{|0|>|1|}P}$, if there exist a polynomial $p$ and a polynomial-time predicate $R$ such that, for each $x$, 

$x \in L \Leftrightarrow B_{0}(||\{y| \ |y| = p(|x|) \wedge R(x, y)\}||)$ $>$ $B_{1}(||\{y| \ |y| = p(|x|) \wedge R(x, y)\}||)$
\end{definition}

\begin{definition}\normalfont
A language $L$ is in complexity class ${\bf B_{|0|<|1|}P}$, if there exist a polynomial $p$ and a polynomial-time predicate $R$ such that, for each $x$, 

$x \in L \Leftrightarrow B_{0}(||\{y| \ |y| = p(|x|) \wedge R(x, y)\}||)$ $<$ $B_{1}(||\{y| \ |y| = p(|x|) \wedge R(x, y)\}||)$
\end{definition}

\begin{definition}\normalfont
A language $L$ is in complexity class ${\bf B_{|0| \oplus}P}$, if there exist a polynomial $p$ and a polynomial-time predicate $R$ such that, for each $x$,  

$x \in L \Leftrightarrow B_{0}(||\{y| \ |y| = p(|x|) \wedge R(x, y)\}||) \not \equiv  {\rm 0 (Mod 2)}$
\end{definition}

\begin{definition}\normalfont
A language $L$ is in complexity class ${\bf B_{|1| \oplus}P}$, if there exist a polynomial $p$ and a polynomial-time predicate $R$ such that, for each $x$,  

$x \in L \Leftrightarrow B_{1}(||\{y| \ |y| = p(|x|) \wedge R(x, y)\}||) \not \equiv  {\rm 0 (Mod 2)}$
\end{definition}

\begin{definition}\normalfont
A language $L$ is in complexity class ${\bf B_{|1|=0}P}$, if there exist a polynomial $p$ and a polynomial-time predicate $R$ such that, for each $x$,
 
$x \in L \Leftrightarrow B_{1}(||\{y| \ |y| = p(|x|) \wedge R(x, y)\}||)$ $= 0$
\end{definition}

\begin{definition}\normalfont
A language $L$ is in complexity class ${\bf B_{|1|>0}P}$, if there exist a polynomial $p$ and a polynomial-time predicate $R$ such that, for each $x$,
 
$x \in L \Leftrightarrow B_{1}(||\{y| \ |y| = p(|x|) \wedge R(x, y)\}||)$ $> 0$
\end{definition}

\begin{definition}\normalfont
A language $L$ is in complexity class ${\bf B_{|0|=0}P}$, if there exist a polynomial $p$ and a polynomial-time predicate $R$ such that, for each $x$, 

$x \in L \Leftrightarrow B_{0}(||\{y| \ |y| = p(|x|) \wedge R(x, y)\}||)$ $= 0$
\end{definition}

\begin{definition}\normalfont
A language $L$ is in complexity class ${\bf B_{|0|>0}P}$, if there exist a polynomial $p$ and a polynomial-time predicate $R$ such that, for each $x$, 

$x \in L \Leftrightarrow B_{0}(||\{y| \ |y| = p(|x|) \wedge R(x, y)\}||)$ $> 0$
\end{definition}

\section{Relationship with classical complexity classes}

\subsection{Equality by definition} 

The following equalities follow from their respective definitions.  

-${\bf B_{|1|=0}P} ={\bf CoNP}$.

-${\bf B_{|1|>0}P} ={\bf NP}$.

-${\bf B_{|0|=0}P} = {\bf MNS} ={\bf C_{=}P}$.
 
-${\bf B_{|0|>0}P} = {\bf CoMNS} = {\bf CoC_{=}P}$. 

\subsection{Comparison based bit-counting complexity classes}

We establish the location of the comparison based bit-counting complexity classes ${\bf B_{|0|=|1|}P}$, ${\bf B_{|0|>|1|}P}$ and ${\bf B_{|0|<|1|}P}$ within the complexity hierarchy with the following seven theorems. 

\begin{theorem}

${\bf PP}\subseteq {\bf B_{|0|=|1|}P}$

\end{theorem}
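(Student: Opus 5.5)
The plan is to start from a ${\bf PP}$ predicate and build a single ${\bf \#P}$ function whose binary expansion is balanced exactly when the count exceeds the threshold. I use only the standard closure properties of ${\bf \#P}$: sums of ${\bf \#P}$ functions, multiplication by a constant $2^{k}$, and adding a polynomial-time computable constant. I also use that $2^{p(|x|)}-f(x)$ is in ${\bf \#P}$, since it counts the strings $y$ of length $p(|x|)$ with $\neg R(x,y)$. The idea is to encode the threshold bit as the \emph{length} of the binary expansion, and then append a bitwise complement so that every bit except the leading one is paired off.

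\textbf{Shifting the threshold onto a bit position.} Let $L\in{\bf PP}$ be witnessed by $p,R$, write $n=p(|x|)$, and let $f(x)=||\{y \mid |y|=n\wedge R(x,y)\}||$, so $0\le f(x)\le 2^{n}$ and $x\in L\Leftrightarrow f(x)\ge 2^{n-1}+1$. Pad $p$ if necessary so that $n\ge 2$. Define $h(x)=f(x)+2^{n-1}-1$, which lies in ${\bf \#P}$. Then $2^{n-1}-1\le h(x)\le 2^{n}+2^{n-1}-1<2^{n+1}$, and $h(x)>0$. Moreover $x\in L$ iff $h(x)\ge 2^{n}$, i.e. iff the binary expansion of $h(x)$ has exactly $n+1$ bits; otherwise it has at most $n$ bits.

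\textbf{Appending the complement.} Let $\bar h(x)=2^{n+1}-1-h(x)$, the bitwise complement of $h(x)$ written with exactly $n+1$ bits, leading zeros included. This is a ${\bf \#P}$ function, because
\[
\bar h(x)=2^{n+1}-2^{n-1}-f(x)=\bigl(2^{n}-f(x)\bigr)+2^{n-1},
\]
which is a sum of a ${\bf \#P}$ function and a constant. Set
\[
g(x)=2^{n+1}h(x)+\bar h(x).
\]
Since $0\le\bar h(x)<2^{n+1}$, the binary expansion of $g(x)$ is the expansion of $h(x)$ followed by the $(n+1)$-bit string of $\bar h(x)$. Finally, $g$ is realized by an NP machine that branches into the three summands, pads all paths to a common polynomial length, and counts accordingly, so $g\in{\bf \#P}$.

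\textbf{Counting the bits.} The main step is the bit count, and it should be routine. View $h(x)$ as an $(n+1)$-bit string $s$ with leading zeros allowed; the low block of $g(x)$ is the complement $\bar s$.
\begin{itemize}
\item If $x\in L$, then $s$ has no leading zero, so the high block is all of $s$. Together $s$ and $\bar s$ contain exactly $n+1$ ones and $n+1$ zeros, so $B_0(g(x))=B_1(g(x))$.
\item If $x\notin L$, then bit $n$ of $s$ is $0$ and is dropped from the expansion of $h(x)$, while $\bar s$ keeps the corresponding $1$ in its top position. More generally, every leading zero of $s$ is absent from $g(x)$, but its complementary $1$ in $\bar s$ is present. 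Hence $B_1(g(x))-B_0(g(x))$ equals the number of leading zeros of $s$, which is at least $1$, so the counts are unequal.
\end{itemize}
Thus $x\in L\Leftrightarrow B_0(g(x))=B_1(g(x))$, so $L\in{\bf B_{|0|=|1|}P}$.

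The only delicate points are to confirm $h(x)\ge 1$, so that the expansion of $g$ starts with a genuine high block, and $\bar h(x)\ge 0$, so that $g$ is a legitimate count; both follow from the range bounds above.
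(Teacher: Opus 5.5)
Your proof is correct. It rests on the same idea as the paper's: shift the count so that membership becomes the condition that a fixed top bit is set, then append a bitwise complement so that paired positions cancel.

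In fact the paper's function is $H(x)=A(x)T+R(x)+T^2-T=2^{n-1}h(x)+\bigl(2^{n}-f(x)\bigr)$, with your $h$ (taking $m=n$ and $A=f$). So the paper appends the complement of only the low $n-1$ bits of $h(x)$, rather than of all $n+1$ bits.

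That partial complement gives a clean concatenation ($10$, then $t$, then $\bar t$) only when $x\in L$. For $x\notin L$ one has $R(x)\ge T=2^{n-1}$, so the appended term no longer fits below the shifted copy of $h(x)$. The paper therefore argues differently in that case: it shows $H(x)$ has odd length $2m-1$ when $A(x)<T$, and treats $A(x)=T$ separately, where $H(x)=2^{2m-1}$.

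Your full-width complement keeps $0\le \bar h(x)<2^{n+1}$ in every case, so $g(x)$ is always a genuine concatenation. The no case then follows uniformly: $B_1(g(x))-B_0(g(x))$ equals the number of leading zeros of $s$. The only cost is a slightly longer count.
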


\begin{proof}
We use the following definition of ${\bf B_{|0|=|1|}P}$. A language $L$ is in ${\bf B_{|0|=|1|}P}$ if there exists a ${\bf \#P}$ function $H(x)$ such that $x\in L \iff B_0(H(x))=B_1(H(x))$, where $B_0(n)$ and $B_1(n)$ denote the number of 0's and 1's bits, respectively, in the binary representation of $n$.

Let $L\in {\bf PP}$. Then there is a non-deterministic polynomial-time Turing machine $M$ and a polynomially bounded function $m(n)$ such that, on every input $x$ of length $n$, the machine $M$ uses exactly $m(n)$ non-deterministic bits. Assume without loss of generality that $m(n)\ge 2$ for every input length $n$. So $M$ has exactly $2^{m(n)}$ computation paths on input $x$ and $x\in L \iff A(x)>2^{m(n)-1}$, where $A(x)=\#\operatorname{acc}_{M}(x)$. We will use $m=m(|x|)$ for readability. Let $R(x)=2^m-A(x)$ be the number of rejecting paths of $M$. The function $R(x)$ is in ${\bf \#P}$, since we can construct a non-deterministic polynomial-time Turing machine that simulates $M$ and accepts exactly on the rejecting paths of $M$. We next define $T=2^{m-1}$ and $H(x)=A(x)\cdot T+R(x)+T^2-T$.

We first verify that $H(x)$ is a valid ${\bf \#P}$ function. The function $A(x)\cdot T=A(x)\cdot 2^{m-1}$ is in ${\bf \#P}$, since we can simulate $M$ and, on each accepting path, guess $m-1$ additional non-deterministic bits. The function $R(x)$ is also in ${\bf \#P}$. Finally, the length-dependent term $T^2-T=2^{2m-2}-2^{m-1}$ is a ${\bf \#P}$ function. To see this, construct a non-deterministic polynomial-time Turing machine that guesses a string $y\in\{0,1\}^{2m-2}$. It rejects exactly those strings whose first $m-1$ bits are all $0$. There are exactly $2^{m-1}$ such strings, since the remaining $m-1$ bits are free. Therefore, this machine has exactly $2^{2m-2}-2^{m-1}=T^2-T$ accepting paths. Since ${\bf \#P}$ is closed under addition, we have $H(x)\in {\bf \#P}$.

We next prove the main theorem with the following two lemmas. 

\begin{lemma}
If $x\in L$, then $B_0(H(x))=B_1(H(x))$.
\end{lemma}

\begin{proof}
Assume $x\in L$. Since $x\in L \iff A(x)>2^{m-1}$, we have $A(x)>T$. So we may write $A(x)=T+1+t$ for some integer $t$ satisfying $0\le t\le T-1$.
Since the total number of paths is $2^m=2T$, we have $R(x)=2T-A(x)=2T-(T+1+t)=T-1-t$. We now compute $H(x)=A(x)\cdot T+R(x)+T^2-T$. Substituting $A(x)=T+1+t$ and $R(x)=T-1-t$, we get $H(x)=(T+1+t)T+(T-1-t)+T^2-T$. Thus, $H(x)=2T^2+tT+(T-1-t)$.

Since $T=2^{m-1}$, the value $t$ can be written using exactly $m-1$ bits. Write $t=(b_{m-2}b_{m-3}\cdots b_0)_2$. Then $T-1-t$ is the bitwise complement of $t$ using exactly $m-1$ bits. Therefore, the binary representation of $H(x)$ is $10\underbrace{b_{m-2}b_{m-3}\cdots b_0}_{m-1}\underbrace{\overline{b_{m-2}}\overline{b_{m-3}}\cdots \overline{b_0}}_{m-1}$.

The initial block $10$ has one $1$ and one $0$. The final $2m-2$ bits consist of $t$ followed by its bitwise complement, so they contain exactly $m-1$ ones and exactly $m-1$ zeros. As a result, the full binary representation of $H(x)$ has exactly $m$ ones and exactly $m$ zeros. Therefore, $B_0(H(x))=B_1(H(x))$.$\qed$

\end{proof}

\begin{lemma}
If $x\notin L$, then $B_0(H(x))\neq B_1(H(x))$.
\end{lemma}

\begin{proof}
Assume $x\notin L$. Then $A(x)\le 2^{m-1}=T$. 

First suppose $A(x)<T$. Then $H(x)=A(x)T+R(x)+T^2-T$. We get $H(x)=A(x)T+(2T-A(x))+T^2-T$ since $R(x)=2T-A(x)$ and this simplifies to  $H(x)=T^2+T+A(x)(T-1)$. Because $0\le A(x)<T$, we have $T^2+T\le H(x)<2T^2$. Since $T=2^{m-1}$, this means that $2^{2m-2}<H(x)<2^{2m-1}$. As a result, the binary representation of $H(x)$ has exactly $2m-1$ bits. Consequently, it cannot have the same number of 0's and 1's bits since $2m-1$ is odd. As a result, $B_0(H(x))\neq B_1(H(x))$. 

Next suppose $A(x)=T$. Then $R(x)=T$ and $H(x)=T\cdot T+T+T^2-T=2T^2$. Since $T=2^{m-1}$, we have $H(x)=2^{2m-1}$. The binary representation of $2^{2m-1}$ is a single $1$ followed by $2m-1$ zeros, which means that $B_1(H(x))=1$ and $B_0(H(x))=2m-1$. We have $2m-1\neq 1$ since we already assumed that $m\ge 2$. As a result, $B_0(H(x))\neq B_1(H(x))$. Therefore, if $x\notin L$, then $B_0(H(x))\neq B_1(H(x))$.$\qed$

\end{proof}

The two lemmas show that $x\in L \iff B_0(H(x))=B_1(H(x))$. Since $H(x)\in {\bf \#P}$, we have that $L\in {\bf B_{|0|=|1|}P}$. Since $L\in{\bf PP}$ was any $L$, then we can conclude that ${\bf PP}\subseteq {\bf B_{|0|=|1|}P}$.$\qed$

\end{proof}

\begin{theorem}

${\bf PP}\subseteq {\bf B_{|0|>|1|}P}$

\end{theorem}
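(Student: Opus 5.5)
The plan is to adapt the complement trick from the proof of ${\bf PP}\subseteq{\bf B_{|0|=|1|}P}$. We build a ${\bf \#P}$ function whose binary representation is a block followed by its bitwise complement, and we arrange that the block has a leading $1$ exactly when $x\in L$. One extra trailing zero then breaks the tie in favour of the 0's bits.

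As before, let $M$ use exactly $m=m(|x|)\ge 2$ non-deterministic bits, and put $A(x)=\#\operatorname{acc}_M(x)$, $R(x)=2^m-A(x)$ and $T=2^{m-1}$, so that $x\in L\iff A(x)\ge T+1$. Define
\[
S(x)=A(x)+T-1, \qquad S'(x)=R(x)+T .
\]
Both are ${\bf \#P}$ functions, since we add $T-1$ (respectively $T$) extra accepting paths produced by a length-dependent branch, exactly as the term $T^2-T$ was produced earlier. The key identity is $S(x)+S'(x)=4T-1=2^{m+1}-1$. Also $0\le S(x)<2^{m+1}$ because $A(x)\le 2T$. Hence, written with exactly $m+1$ bits, $S'(x)$ is the bitwise complement of $S(x)$. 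Moreover, the top bit (weight $2^m=2T$) of $S(x)$ is $1$ iff $A(x)+T-1\ge 2T$ iff $x\in L$.

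Set
\[
H(x)=2\bigl(S(x)\cdot 2^{m+1}+S'(x)\bigr).
\]
This is in ${\bf \#P}$: multiplying by $2^{m+2}$ or by $2$ means guessing extra bits on each accepting path, and ${\bf \#P}$ is closed under addition. Also $H(x)>0$, since $S'(x)\ge T\ge 1$. The inner number $E=S\cdot 2^{m+1}+S'$ is, as a padded $(2m+2)$-bit string, $S$ followed by its complement. That padded string has exactly $m+1$ ones and $m+1$ zeros. We then split into two cases.
\begin{itemize}
\item If $x\in L$, the leading bit of $S$ is $1$, so no padding is dropped. Then $E$ has $m+1$ ones and $m+1$ zeros, and $H=2E$ has $m+2$ zeros and $m+1$ ones, so $B_0(H(x))>B_1(H(x))$.
\item If $x\notin L$, the leading padded bit is $0$, so at least one zero is dropped. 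Then $E$ has $m+1$ ones and at most $m$ zeros. After the shift, $H$ has at most $m+1$ zeros and $m+1$ ones, so $B_0(H(x))\le B_1(H(x))$.
\end{itemize}

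The main point needing care is verifying the complement structure: that $S(x)<2^{m+1}$ always, and that $S'(x)$ occupies exactly the low $m+1$ bits, so that $S$ and $S'$ do not interact under addition. Both follow from the identity $S+S'=2^{m+1}-1$ together with $S,S'\ge 0$. A second point is that in the case $x\notin L$ the top nonzero bit of $H$ may lie inside the $S'$ block, or $S$ may be $0$. This is harmless, because dropping leading zeros only decreases $B_0$ and never changes $B_1$.
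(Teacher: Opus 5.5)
Your proof is correct, but it takes a different route from the paper's. The paper does not reuse the complement-block trick for this inclusion. Instead it shifts the count so that the ${\bf PP}$ threshold coincides with crossing a power of two. With $\ell=2m+1$ it takes $H(x)=A(x)+2^\ell-2^{m-1}-1$:
if $x\in L$, then $H(x)=2^\ell+t$ with $0\le t<2^{m-1}$, which is a leading $1$ followed by $m+2$ forced zeros, so $B_0\ge m+2>m\ge B_1$;
if $x\notin L$, then $2^\ell-2^{m-1}-1\le H(x)\le 2^\ell-1$, an $\ell$-bit number beginning with $m+1$ forced ones, so $B_1\ge m+1>m\ge B_0$.

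You instead hold $B_1$ fixed at $m+1$. The ${\bf PP}$ condition then decides, through the top padded bit of $S(x)=A(x)+T-1$, whether a leading zero of the $(2m+2)$-bit string ``$S$ followed by its complement'' is lost, and the appended trailing zero breaks the tie.

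The paper's construction is more economical: a single constant shift, no $R(x)$, no multiplications. It also yields a wide strict gap in both directions, so the same $H$ witnesses $\overline{L}\in{\bf B_{|0|<|1|}P}$ as well. Yours has a margin of only one bit in the accepting case, but it is more uniform. The number $E(x)=S(x)2^{m+1}+S'(x)$ satisfies $B_0(E)=B_1(E)$ iff $x\in L$, and $B_0(E)<B_1(E)$ iff $x\notin L$. So $E$, $2E$ and the closure of ${\bf PP}$ under complement give all three comparison inclusions from one object.

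A minor point: since $m\ge 2$, you always have $S(x)\ge T-1\ge 1$, so the case $S=0$ you guard against never arises. Your handling of it is harmless anyway.
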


\begin{proof}
We use the following definition of ${\bf B_{|0|>|1|}P}$. A language $L$ is in ${\bf B_{|0|>|1|}P}$ if there exists a ${\bf \#P}$ function $H(x)$ such that $x\in L \iff B_0(H(x))>B_1(H(x))$, where $B_0(n)$ and $B_1(n)$ denote the number of 0's and 1's bits, respectively, in the binary representation of $n$.

Let $L\in {\bf PP}$. Then there is a non-deterministic polynomial-time Turing machine $M$ and a polynomially bounded function $m(n)$ such that, on every input $x$ of length $n$, the machine $M$ uses exactly $m(n)$ non-deterministic bits. Assume without loss of generality that $m(n)\ge 2$ for every input length $n$. So $M$ has exactly $2^{m(n)}$ computation paths on input $x$ and $x\in L \iff A(x)>2^{m(n)-1}$, where $A(x)=\#\operatorname{acc}_{M}(x)$. We will use $m=m(|x|)$ for readability. We define $T=2^{m-1}$, $\ell=2m+1$ and $H(x)=A(x)+2^\ell-T-1$.

We first verify that $H(x)$ is a valid $\#P$ function. The function $A(x)$ is in ${\bf \#P}$ by definition. Also, the length-dependent term $2^\ell-T-1=2^\ell-2^{m-1}-1$ is a ${\bf \#P}$ function. To see this, construct a non-deterministic polynomial-time Turing machine that guesses a string $y\in\{0,1\}^{\ell}$. It rejects exactly those strings whose first $\ell-(m-1)=\ell-m+1$ bits are all $0$ and it also rejects one additional string, say $10\cdots 0$. These two rejection conditions are disjoint. The first condition rejects exactly $2^{m-1}$ strings, since the remaining $m-1$ bits are free and the second condition rejects exactly one additional string. Therefore, this machine has exactly $2^\ell-2^{m-1}-1$ accepting paths. Since ${\bf \#P}$ is closed under addition, we have $H(x)=A(x)+2^\ell-2^{m-1}-1\in {\bf \#P}$.

We now prove the main theorem with the following two lemmas.

\begin{lemma}
If $x\in L$, then $B_0(H(x))>B_1(H(x))$.
\end{lemma}

\begin{proof}
Assume $x\in L$. Since $x\in L \iff A(x)>2^{m-1}$, we have $A(x)>T$. So we may write $A(x)=T+s$ for some integer $s\ge 1$. Since $A(x)\le 2^m$, we also have $s\le 2^m-T=T=2^{m-1}$. Now $H(x)=A(x)+2^\ell-T-1=(T+s)+2^\ell-T-1=2^\ell+(s-1)$. Let $t=s-1$. Then $0\le t\le 2^{m-1}-1$. Writing $t$ in binary using exactly $m-1$ bits as $t=(b_{m-2}b_{m-3}\cdots b_0)_2$, the binary representation of $H(x)=2^\ell+t$ is $1\underbrace{00\cdots 0}_{\ell-m+1}\underbrace{b_{m-2}b_{m-3}\cdots b_0}_{m-1}$. 

Thus, the binary representation of $H(x)$ has a leading $1$ in the $2^\ell$ position, followed by $\ell-m+1$ forced zeros, followed by the $m-1$ lower-order bits encoding $t$. The lower-order part contributes at most $m-1$ ones. When we also include the leading $1$, we get at most $1+(m-1)=m$ ones. Thus, $B_1(H(x))\le 1+(m-1)=m$. Also, the forced middle block contributes exactly $\ell-m+1$ zeros. So then we get at least $\ell-m+1=(2m+1)-m+1=m+2$ zeros, since $\ell=2m+1$. Thus, $B_0(H(x))\ge m+2$. As a result, we get $B_1(H(x))\le m<m+2\le B_0(H(x))$. Therefore, $B_0(H(x))>B_1(H(x))$.$\qed$

\end{proof}

\begin{lemma}
If $x\notin L$, then $B_0(H(x)) < B_1(H(x))$.
\end{lemma}

\begin{proof}
Assume $x\notin L$. Then $A(x)\le 2^{m-1}=T$. Therefore, $0\le A(x)\le T$. We have $2^\ell-T-1\le H(x)\le 2^\ell-1$ since $H(x)=2^\ell-T-1+A(x)$. This becomes $2^\ell-2^{m-1}-1\le H(x)\le 2^\ell-1$, since $T=2^{m-1}$. Note that the lower endpoint $2^\ell-2^{m-1}-1$ has binary representation $\underbrace{11\cdots 1}_{\ell-m}0\underbrace{11\cdots 1}_{m-1}$, while the upper endpoint $2^\ell-1$ has binary representation $\underbrace{11\cdots 1}_{\ell}$. As a result, every number in the interval $[2^\ell-2^{m-1}-1,\,2^\ell-1]$ has binary length exactly $\ell$ and begins with at least $\ell-m$ consecutive $1$'s. The remaining $m$ lower-order bits may vary, and they may contain several $0$'s bits, one $0$ bit, or no $0$'s bits. Thus, $B_1(H(x))\ge \ell-m$. We have $\ell-m=(2m+1)-m=m+1$, since $\ell=2m+1$. So then we have $B_1(H(x))\ge m+1$. Also, since the binary representation of $H(x)$ has exactly $\ell=2m+1$ bits, the number of zero bits is at most $(2m+1)-(m+1)=m$. Thus, $B_0(H(x))\le m$. As a result, we get $B_0(H(x))\le m<m+1\le B_1(H(x))$. Therefore, $B_0(H(x))< B_1(H(x))$.$\qed$
\end{proof}

The two lemmas show that $x\in L \iff B_0(H(x))>B_1(H(x))$. Since $H(x)\in {\bf \#P}$, we have that $L\in {\bf B_{|0|>|1|}P}$. Since $L\in{\bf PP}$ was any $L$, then we can conclude that ${\bf PP}\subseteq {\bf B_{|0|>|1|}P}$.$\qed$

\end{proof}

\begin{theorem}

${\bf PP}\subseteq {\bf B_{|0|<|1|}P}$

\end{theorem}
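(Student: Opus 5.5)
The plan mirrors the construction in the proof of ${\bf PP}\subseteq {\bf B_{|0|>|1|}P}$, with the roles of $0$'s and $1$'s reversed. The reversal comes from counting the \emph{rejecting} paths instead of the accepting ones. Let $L\in{\bf PP}$ via a machine $M$ with exactly $m=m(|x|)\ge 2$ non-deterministic bits, and let $A(x)=\#\operatorname{acc}_M(x)$. As in the proof of ${\bf PP}\subseteq {\bf B_{|0|=|1|}P}$, let $R(x)=2^m-A(x)$, which is in ${\bf \#P}$. Let $T=2^{m-1}$, so that $x\in L\iff A(x)>T\iff R(x)\le T-1$. I will set $\ell=2m+1$ and
\[
H(x)=R(x)+2^\ell-T .
\]

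The first step is to check that $H\in{\bf \#P}$. The constant term $2^\ell-2^{m-1}$ is a ${\bf \#P}$ function of the input length. A machine witnesses this by guessing $y\in\{0,1\}^\ell$ and rejecting exactly those $y$ whose first $\ell-m+1$ bits are all $0$. There are $2^{m-1}$ such strings, so the machine has $2^\ell-2^{m-1}$ accepting paths. Closure of ${\bf \#P}$ under addition then gives $H\in{\bf \#P}$.

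Next come two lemmas, as in the earlier theorems. If $x\in L$, then $0\le R(x)\le T-1$, so $2^\ell-T\le H(x)\le 2^\ell-1$. The number $2^\ell-2^{m-1}$ has binary form $\underbrace{1\cdots1}_{\ell-m+1}\underbrace{0\cdots0}_{m-1}$. Adding $R(x)<2^{m-1}$ only fills in the low $m-1$ bits. Hence $H(x)$ has exactly $\ell$ bits, at least $\ell-m+1=m+2$ ones, and at most $m-1$ zeros, so $B_0(H(x))<B_1(H(x))$.

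If $x\notin L$, then $T\le R(x)\le 2T$, so $H(x)=2^\ell+s$ with $0\le s\le 2^{m-1}$. Then $H(x)$ has exactly $\ell+1=2m+2$ bits. If $s<2^{m-1}$, the ones are the leading bit plus at most $m-1$ low bits, so there are at most $m$ ones. If $s=2^{m-1}$, there are exactly $2$ ones. In either case there are at least $2m+2-m=m+2$ zeros, so $B_0(H(x))>B_1(H(x))$, and in particular the condition $B_0<B_1$ fails.

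The only delicate step is the boundary bookkeeping. I must confirm that the extreme value $R(x)=2T$ does not change the bit length, and that the choice $\ell=2m+1$ leaves a strict gap in both cases. The computations above indicate that it does, with margins of at least $1$. Combining the two lemmas gives $x\in L\iff B_0(H(x))<B_1(H(x))$, hence ${\bf PP}\subseteq{\bf B_{|0|<|1|}P}$.
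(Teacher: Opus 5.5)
Your proposal is correct and matches the paper's proof: same $H(x)=R(x)+2^{\ell}-T$ with $\ell=2m+1$, the same ${\bf \#P}$ witness for the constant $2^{\ell}-2^{m-1}$, and the same two-case bit counting. The only difference is that you treat the boundary case $s=2^{m-1}$ separately. The paper absorbs that case into its single bound that the low part of $H(x)$ contributes at most $m-1$ ones.
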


\begin{proof}
We use the following definition of ${\bf B_{|0|<|1|}P}$. A language $L$ is in ${\bf B_{|0|<|1|}P}$ if there exists a ${\bf \#P}$ function $H(x)$ such that $x\in L \iff B_1(H(x))>B_0(H(x))$, where $B_1(n)$ and $B_0(n)$ denote the number of 1's and 0's bits, respectively, in the binary representation of $n$.

Let $L\in {\bf PP}$. Then there is a non-deterministic polynomial-time Turing machine $M$ and a polynomially bounded function $m(n)$ such that, on every input $x$ of length $n$, the machine $M$ uses exactly $m(n)$ non-deterministic bits. Assume without loss of generality that $m(n)\ge 2$ for every input length $n$. So $M$ has exactly $2^{m(n)}$ computation paths on input $x$ and $x\in L \iff A(x)>2^{m(n)-1}$, where $A(x)=\#\operatorname{acc}_{M}(x)$. We will use $m=m(|x|)$ for readability. Let $R(x)=2^m-A(x)$ be the number of rejecting paths of $M$. The function $R(x)$ is in ${\bf \#P}$, since we can construct a non-deterministic Turing machine that simulates $M$ and accepts exactly on the rejecting paths of $M$. We next define $T=2^{m-1}$, $\ell=2m+1$ and $H(x) = R(x)+2^\ell-T$.

We first verify that $H(x)$ is a valid $\#P$ function. The function $R(x)$ is in ${\bf \#P}$. Also, the length dependent term $2^\ell-T = 2^\ell-2^{m-1}$ is a ${\bf \#P}$ function. To see this, construct a non-deterministic polynomial-time Turing machine that guesses a string $y\in\{0,1\}^{\ell}$. It rejects exactly those strings whose first $\ell-(m-1)=\ell-m+1$ bits are all $0$. There are exactly $2^{m-1}$ such strings, since the remaining $m-1$ bits are free. Therefore, this machine has exactly $2^\ell-2^{m-1}$ accepting paths. Since ${\bf \#P}$ is closed under addition, we have $H(x)=R(x)+2^\ell-2^{m-1}\in {\bf \#P}$.

We now prove the main theorem with the following two lemmas. 

\begin{lemma}
If $x\in L$, then $B_1(H(x))>B_0(H(x))$.
\end{lemma}

\begin{proof}
Assume $x\in L$. Since $x\in L \iff A(x)>2^{m-1}$, we have $A(x)>T$. Therefore, $R(x)=2^m-A(x)<2^m-T=T$. So $0\le R(x)\le T-1$. Thus, $H(x)=2^\ell-T+R(x)$ lies in the interval $2^\ell-T \le H(x) \le 2^\ell-1$. Since $T=2^{m-1}$, the lower endpoint is $2^\ell-2^{m-1}$. Its binary representation is $\underbrace{11\cdots 1}_{\ell-m+1} \underbrace{00\cdots 0}_{m-1}$. 

Every number in the interval $[2^\ell-2^{m-1},\,2^\ell-1]$ has at least the first $\ell-m+1$ bits equal to $1$. Thus, $B_1(H(x))\ge \ell-m+1$. We get $\ell-m+1=(2m+1)-m+1=m+2$, since $\ell=2m+1$. Thus, $B_1(H(x))\ge m+2$. All possible zeros occur only among the final $m-1$ lower-order positions. Thus, $B_0(H(x))\le m-1$. As a result, we get $B_0(H(x))\le m-1 < m+2 \le B_1(H(x))$. Therefore, $B_1(H(x))>B_0(H(x))$.$\qed$

\end{proof}

\begin{lemma}
If $x\notin L$, then $B_1(H(x)) < B_0(H(x))$.
\end{lemma}

\begin{proof}
Assume $x\notin L$. Then $A(x)\le 2^{m-1}=T$. Therefore, $R(x)=2^m-A(x)\ge 2^m-T=T$. So we may write $R(x)=T+s$ for some integer $s\ge 0$. Since $R(x)\le 2^m$, we also have $s\le 2^m-T=T=2^{m-1}$. Now $H(x) = R(x)+2^\ell-T = (T+s)+2^\ell-T = 2^\ell+s$. Writing $s$ in binary using exactly $m$ bits as $s=(b_{m-1}b_{m-2}\cdots b_0)_2$, the binary representation of $H(x)=2^\ell+s$ is $1\underbrace{00\cdots 0}_{\ell-m}\underbrace{b_{m-1}b_{m-2}\cdots b_0}_{m}$. 

Thus, the binary representation of $H(x)$ has a leading $1$ in the $2^\ell$ position, followed by $\ell-m$ forced zeros, followed by the $m$ lower-order bits encoding $s$. Since $0\le s\le 2^{m-1}$, the lower-order part contributes at most $m-1$ ones. As a result, $B_1(H(x))\le 1+(m-1)=m$. The binary length of $H(x)$ is $\ell+1$. Thus, $B_0(H(x)) = (\ell+1)-B_1(H(x))$. Using $B_1(H(x))\le m$, we obtain $B_0(H(x)) \ge (\ell+1)-m$. We get $B_0(H(x)) \ge (2m+2)-m = m+2$ since $\ell=2m+1$. As a result, we get $B_1(H(x))\le m < m+2 \le B_0(H(x)) $. Therefore, $B_1(H(x))<B_0(H(x))$. $\qed$

\end{proof}

The two lemmas show that $x\in L \iff B_1(H(x))>B_0(H(x))$. Since $H(x)\in {\bf \#P}$, we have that $L\in {\bf B_{|0|<|1|}P}$. Since $L\in{\bf PP}$ was any $L$, then we can conclude that ${\bf PP}\subseteq {\bf B_{|0|<|1|}P}$. $\qed$

\end{proof}

\begin{theorem}

${\bf B_{|0|=|1|}P}\subseteq {\bf P}^{\bf PP}$

\end{theorem}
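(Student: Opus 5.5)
Our plan is to use the equality ${\bf P}^{\bf PP}={\bf P}^{\bf \#P}$ recalled in Section 2: we design a deterministic polynomial-time machine that asks polynomially many ${\bf PP}$ queries, reconstructs the count exactly, and then checks the bit-count condition. Let $L\in{\bf B_{|0|=|1|}P}$ be witnessed by a polynomial $p$ and a polynomial-time predicate $R$, and put $f(x)=||\{y \mid |y|=p(|x|)\wedge R(x,y)\}||$. Then $f\in{\bf \#P}$ and $0\le f(x)\le 2^{p(|x|)}$. So $f(x)$ has a binary representation with at most $p(|x|)+1$ bits, and it suffices to compute $f(x)$ exactly in polynomial time with a ${\bf PP}$ oracle.

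To recover $f(x)$ we will use binary search with threshold queries, and we will make this concrete rather than just cite ${\bf P}^{\bf PP}={\bf P}^{\bf \#P}$. Consider the language
\[
L' = \{\langle x,k\rangle \mid f(x)\ge k\},
\]
where $0\le k\le 2^{p(|x|)}+1$ is written in binary. We will show $L'\in{\bf PP}$ using the same padding techniques as in the earlier theorems. We build a machine that guesses one extra bit $b$ and a string $y$ of length $p(|x|)$. On $b=1$ it accepts iff $R(x,y)$ holds, which contributes $f(x)$ accepting paths. On $b=0$ it accepts on exactly $2^{p(|x|)}-k+1$ of the strings $y$, a count that is polynomial-time decidable by comparing $y$ lexicographically with a bound computed from $k$. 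The edge cases $k=0$ and $k=2^{p(|x|)}+1$ are handled directly, or by padding with one more guessed bit. The total number of accepting paths is then $f(x)+2^{p}-k+1$, which is strictly greater than $2^{p}$, half of the $2^{p+1}$ paths, iff $f(x)\ge k$. With $p+2$ adaptive queries to $L'$, binary search determines $f(x)$ exactly.

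After this, the base machine writes $f(x)$ in binary, counts $B_0(f(x))$ and $B_1(f(x))$ in time polynomial in $p(|x|)$, and accepts iff they are equal. This decides $L$ in ${\bf P}^{\bf PP}$. The same argument works unchanged for ${\bf B_{|0|>|1|}P}$, ${\bf B_{|0|<|1|}P}$ and the parity classes, since it only computes $f(x)$ and then applies a polynomial-time predicate to its bits.

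We expect the main technical step to be the threshold construction showing $L'\in{\bf PP}$. It must produce exactly $2^{p}-k+1$ accepting paths on the $b=0$ branch, with a strict-majority threshold that matches the paper's ${\bf PP}$ definition, and it must handle the boundary values of $k$. Everything else is routine. If that construction turns out awkward, the fallback is to invoke the cited fact ${\bf P}^{\bf PP}={\bf P}^{\bf \#P}$ directly and query the bits of $f(x)$ from a ${\bf \#P}$ oracle.
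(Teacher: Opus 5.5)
Your proposal is correct and follows essentially the same route as the paper. You show that the threshold language $\{\langle x,k\rangle : f(x)\ge k\}$ is in ${\bf PP}$ using a machine with $2^{p+1}$ paths, where one half simulates the counting machine and the other half accepts exactly $2^{p}-k+1$ paths, then recover $f(x)$ by binary search with the ${\bf PP}$ oracle and check $B_0(f(x))=B_1(f(x))$ directly, with your explicit handling of the boundary values of $k$ being only a minor refinement of the paper's restriction to $1\le z\le 2^q$.
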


\begin{proof}
Let $L\in{\bf B_{|0|=|1|}P}$. Then there exists a ${\bf \#P}$ function $H(x)$ such that $x\in L \iff B_0(H(x))=B_1(H(x))$, where $B_0(n)$ and $B_1(n)$ denote the number of $0$'s and $1$'s bits, respectively, in the standard binary representation of $n$. Since $H(x)$ is a ${\bf \#P}$ function, there is a non-deterministic polynomial-time Turing machine $M$ and a polynomial $q$ such that, without loss of generality, on every input $x$ of length $n$, the machine $M$ has exactly $2^{q(n)}$ computation paths and $H(x)=\#\operatorname{acc}_{M}(x)$. Thus, $0\le H(x)\le 2^{q(|x|)}$. We will use $q=q(|x|)$ for readability.

We claim that a ${\bf P}^{\bf PP}$ machine can recover the exact value of $H(x)$. We can observe this by considering threshold questions of the form $H(x)\ge z$, where $1\le z\le 2^q$. Each such threshold question is in ${\bf PP}$. To see this, construct an auxiliary non-deterministic machine with exactly $2^{q+1}$ computation paths as follows: One half of the paths simulate $M(x)$ and contribute exactly $H(x)$ accepting paths, while the other half accepts exactly $2^q-z+1$ paths. Therefore, the auxiliary machine has exactly $H(x)+(2^q-z+1)$ accepting paths. Its majority threshold is $2^q$, and $H(x)+(2^q-z+1)>2^q$ if and only if $H(x)\ge z$. Thus, the threshold language $\{(x,z):H(x)\ge z\}$ is in ${\bf PP}$.

Using this ${\bf PP}$ oracle, a deterministic polynomial-time machine can recover $H(x)$ by binary search over the interval $0\le H(x)\le 2^{q}$. Once it has recovered $H(x)$, it can write down the binary representation of $H(x)$, compute $B_0(H(x))$ and $B_1(H(x))$ and accept exactly when $B_0(H(x))=B_1(H(x))$. Therefore, $L\in{\bf P}^{\bf PP}$. Since $L\in{\bf B_{|0|=|1|}P}$ was any $L$, then we can conclude that ${\bf B_{|0|=|1|}P}\subseteq{\bf P}^{\bf PP}$. $\qed$

\end{proof}

\begin{theorem}
${\bf B_{|0|>|1|}P}\subseteq {\bf P}^{\bf PP}$
\end{theorem}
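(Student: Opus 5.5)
The plan is to reuse, essentially verbatim, the argument of the preceding theorem (${\bf B_{|0|=|1|}P}\subseteq{\bf P}^{\bf PP}$). Only the final deterministic check changes: we test $B_0>B_1$ instead of $B_0=B_1$.

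Let $L\in{\bf B_{|0|>|1|}P}$. Then there is a ${\bf \#P}$ function $H(x)$ with $x\in L\iff B_0(H(x))>B_1(H(x))$. Without loss of generality we fix a non-deterministic polynomial-time machine $M$ and a polynomial $q$ such that, on inputs of length $n$, $M$ has exactly $2^{q(n)}$ computation paths and $H(x)=\#\operatorname{acc}_M(x)$. Hence $0\le H(x)\le 2^{q}$, where $q=q(|x|)$.

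The key step is to show that the threshold language $\{(x,z): H(x)\ge z\}$, for $1\le z\le 2^q$, lies in ${\bf PP}$. We use the same auxiliary machine as in the previous proof. It has $2^{q+1}$ paths: one half simulates $M(x)$, and the other half accepts exactly $2^q-z+1$ paths. So it has $H(x)+2^q-z+1$ accepting paths, and this exceeds $2^q$ exactly when $H(x)\ge z$. The second half can be built by guessing $y\in\{0,1\}^q$ and accepting iff the value of $y$ is at least $z-1$. Since $z$ is part of the input, this is a polynomial-time comparison.

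With this ${\bf PP}$ oracle, a deterministic polynomial-time machine runs binary search over $[0,2^q]$ and recovers $H(x)$ with $O(q)$ queries. It then writes out the standard binary representation of $H(x)$, which has at most $q+1$ bits, counts $B_0(H(x))$ and $B_1(H(x))$, and accepts iff $B_0(H(x))>B_1(H(x))$. Therefore $L\in{\bf P}^{\bf PP}$.

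There is no real obstacle here, since the argument was already carried out for the preceding theorem. The only point needing care is the convention for $H(x)=0$: we fix the binary representation of $0$ as the single bit $0$, consistent with the definition. The deterministic machine simply evaluates the predicate on whatever representation the definition specifies, so the argument is unaffected.
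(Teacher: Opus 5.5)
Your proposal is correct and matches the paper's proof: you place the threshold queries $H(x)\ge z$ in ${\bf PP}$ via the padded auxiliary machine, recover $H(x)$ by binary search, and then check $B_0(H(x))>B_1(H(x))$ deterministically. Your explicit construction of the half with $2^q-z+1$ accepting paths (accept $y\in\{0,1\}^q$ iff its value is at least $z-1$) is a correct detail that the paper leaves implicit.
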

\begin{proof}
Let $L\in{\bf B_{|0|>|1|}P}$. Then there exists a ${\bf \#P}$ function $H(x)$ such that $x\in L \iff B_0(H(x))>B_1(H(x))$, where $B_0(n)$ and $B_1(n)$ denote the number of $0$'s and $1$'s bits, respectively, in the standard binary representation of $n$. Since $H(x)$ is a ${\bf \#P}$ function, there is a non-deterministic polynomial-time Turing machine $M$ and a polynomial $q$ such that, without loss of generality, on every input $x$ of length $n$, the machine $M$ has exactly $2^{q(n)}$ computation paths and $H(x)=\#\operatorname{acc}_{M}(x)$. Thus, $0\le H(x)\le 2^{q(|x|)}$. We will use $q=q(|x|)$ for readability.

As shown in the previous theorem, threshold questions of the form $H(x)\ge z$, where $1\le z\le 2^q$, are in ${\bf PP}$. Therefore, a deterministic polynomial-time machine with access to a ${\bf PP}$ oracle can recover the exact value of $H(x)$ by binary search over the interval $0\le H(x)\le 2^q$. After recovering $H(x)$, the machine computes its standard binary representation, counts the number of $0$'s and $1$'s bits, and accepts exactly when $B_0(H(x))>B_1(H(x))$. Therefore, $L\in{\bf P}^{\bf PP}$. Since $L\in{\bf B_{|0|>|1|}P}$ was any $L$, then we can conclude that ${\bf B_{|0|>|1|}P}\subseteq{\bf P}^{\bf PP}$.$\qed$
\end{proof}

\begin{theorem}

${\bf B_{|0|<|1|}P}\subseteq {\bf P}^{\bf PP}$

\end{theorem}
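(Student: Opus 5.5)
The plan is to follow the previous two theorems exactly, since nothing in the argument depends on which comparison between $B_0$ and $B_1$ is tested at the end. Let $L\in{\bf B_{|0|<|1|}P}$. Then there is a ${\bf \#P}$ function $H(x)$ with $x\in L \iff B_0(H(x))<B_1(H(x))$. We may assume, as before, a non-deterministic polynomial-time machine $M$ and a polynomial $q$ such that $M$ has exactly $2^{q}$ computation paths on inputs of length $n$, where $q=q(n)$, and $H(x)=\#\operatorname{acc}_M(x)$. Hence $0\le H(x)\le 2^{q}$.

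The first step is to reuse the observation from the proof that ${\bf B_{|0|=|1|}P}\subseteq{\bf P}^{\bf PP}$: the threshold language $\{(x,z): H(x)\ge z\}$ is in ${\bf PP}$ for $1\le z\le 2^q$. To see this, pad to $2^{q+1}$ paths. One half simulates $M(x)$, and the other half accepts exactly $2^q-z+1$ paths, which is polynomial-time since $z$ is given in binary. The result is then compared against the majority threshold $2^q$. Because this was already established, I will simply cite it.

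The second step is binary search over $[0,2^q]$. This uses $O(q)$ adaptive queries to the ${\bf PP}$ oracle and recovers $H(x)$ exactly, as a number of $q+1$ bits. The deterministic machine then writes out the standard binary representation of $H(x)$, counts $B_0$ and $B_1$ in polynomial time, and accepts if and only if $B_0(H(x))<B_1(H(x))$. This places $L$ in ${\bf P}^{\bf PP}$.

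There is no real obstacle here. The only point needing care is the value $H(x)=0$, whose binary representation should be fixed by the same convention the paper uses for "standard binary representation". The deterministic machine can handle this case explicitly, either as the string $0$ or as the empty string, and the final comparison remains polynomial-time either way.
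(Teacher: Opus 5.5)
Your proposal is correct and follows the paper's proof essentially verbatim. Like the paper, you cite that the threshold language $\{(x,z): H(x)\ge z\}$ is in ${\bf PP}$, recover $H(x)$ by binary search over $[0,2^q]$ with a ${\bf PP}$ oracle, and accept exactly when $B_0(H(x))<B_1(H(x))$; your remark on fixing the binary-representation convention for $H(x)=0$ is a harmless refinement that the paper leaves implicit.
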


\begin{proof}
Let $L\in{\bf B_{|0|<|1|}P}$. Then there exists a ${\bf \#P}$ function $H(x)$ such that $x\in L \iff B_0(H(x))<B_1(H(x))$, where $B_0(n)$ and $B_1(n)$ denote the number of $0$'s and $1$'s bits, respectively, in the standard binary representation of $n$. Since $H(x)$ is a ${\bf \#P}$ function, there is a non-deterministic polynomial-time Turing machine $M$ and a polynomial $q$ such that, without loss of generality, on every input $x$ of length $n$, the machine $M$ has exactly $2^{q(n)}$ computation paths and $H(x)=\#\operatorname{acc}_{M}(x)$. Thus, $0\le H(x)\le 2^{q(|x|)}$. We will use $q=q(|x|)$ for readability.

As shown in previous theorems, threshold questions of the form $H(x)\ge z$, where $1\le z\le 2^q$, are in ${\bf PP}$. Therefore, a deterministic polynomial-time machine with access to a ${\bf PP}$ oracle can recover the exact value of $H(x)$ by binary search over the interval $0\le H(x)\le 2^q$. After recovering $H(x)$, the machine computes its standard binary representation, counts the number of 1's and 0's bits and accepts exactly when $B_0(H(x))<B_1(H(x))$. Therefore, $L\in{\bf P}^{\bf PP}$. Since $L\in{\bf B_{|0|<|1|}P}$ was any $L$, then we can conclude that ${\bf B_{|0|<|1|}P}\subseteq{\bf P}^{\bf PP}$.$\qed$

\end{proof}

\begin{theorem}

${\bf P}^{{\bf B_{|0|=|1|}P}}={\bf P}^{{\bf B_{|0|>|1|}P}}={\bf P}^{{\bf B_{|0|<|1|}P}}={\bf P}^{\bf PP}$.

\end{theorem}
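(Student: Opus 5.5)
The plan is to prove the two inclusions ${\bf P}^{\bf PP}\subseteq{\bf P}^{\bf C}$ and ${\bf P}^{\bf C}\subseteq{\bf P}^{\bf PP}$ for each ${\bf C}\in\{{\bf B_{|0|=|1|}P},{\bf B_{|0|>|1|}P},{\bf B_{|0|<|1|}P}\}$. All four classes in the chain will then coincide with ${\bf P}^{\bf PP}$. Both directions rest on the standard monotonicity of relativization: if ${\bf A}\subseteq{\bf B}$ as language classes, then ${\bf P}^{\bf A}\subseteq{\bf P}^{\bf B}$. The reason is that a polynomial-time oracle machine querying a language $L\in{\bf A}$ is literally the same machine querying $L$ as a member of ${\bf B}$.

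For the lower bound, I invoke the three earlier theorems ${\bf PP}\subseteq{\bf B_{|0|=|1|}P}$, ${\bf PP}\subseteq{\bf B_{|0|>|1|}P}$ and ${\bf PP}\subseteq{\bf B_{|0|<|1|}P}$. Monotonicity then gives ${\bf P}^{\bf PP}\subseteq{\bf P}^{\bf C}$ for each ${\bf C}$ directly.

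For the upper bound, I invoke the theorems ${\bf C}\subseteq{\bf P}^{\bf PP}$ established just above. Monotonicity yields ${\bf P}^{\bf C}\subseteq{\bf P}^{{\bf P}^{\bf PP}}$, so it remains to show ${\bf P}^{{\bf P}^{\bf PP}}={\bf P}^{\bf PP}$. This is the standard self-lowness of ${\bf P}$ as a base. Given an outer machine $N$ running in time $p(n)$ and querying a language $L'\in{\bf P}^{\bf PP}$ decided by an oracle machine $N'$ in time $p'(n)$, I simulate $N$ directly. Each time $N$ asks a query $w$ with $|w|\le p(n)$, I run $N'$ on $w$ myself, forwarding $N'$'s queries to the ${\bf PP}$ oracle. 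The total running time is at most $p(n)\cdot p'(p(n))$, which is polynomial, and only the ${\bf PP}$ oracle is used.

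Alternatively, I could avoid the collapse step entirely by repeating the binary-search argument from the previous theorems inside the simulation. Each ${\bf C}$-query $w$ is determined by a ${\bf \#P}$ function $H(w)$, whose value is recovered with polynomially many threshold queries $H(w)\ge z$, each of which is in ${\bf PP}$ by the earlier construction. The bit counts are then computed deterministically.

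I expect no serious obstacle. The only point requiring care is the polynomial bound in the simulation: query strings have length at most $p(n)$, so each inner simulation costs $p'(p(n))$ time and there are at most $p(n)$ of them. Once that is checked, combining both directions for each of the three classes gives the stated chain of equalities.
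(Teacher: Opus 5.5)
Your proposal is correct and matches the paper's proof: both directions come from monotonicity of relativization applied to ${\bf PP}\subseteq\mathcal{C}$ and $\mathcal{C}\subseteq{\bf P}^{\bf PP}$, together with the collapse ${\bf P}^{{\bf P}^{\bf PP}}={\bf P}^{\bf PP}$. The only difference is that you spell out the simulation and its polynomial time bound for that collapse, which the paper simply asserts.
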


\begin{proof}

Let $\mathcal{C}$ be any one of the following three classes:

$\mathcal{C}\in\{{\bf B_{|0|=|1|}P},{\bf B_{|0|>|1|}P},{\bf B_{|0|<|1|}P}\}$. 

From the previous containment theorems, we have $\mathcal{C}\subseteq{\bf P}^{\bf PP}$. Also, from the earlier theorems, we have ${\bf PP}\subseteq\mathcal{C}$. We now show that ${\bf P}^{\mathcal{C}}={\bf P}^{\bf PP}$. 

First, since ${\bf PP}\subseteq\mathcal{C}$, any ${\bf P}^{\bf PP}$ computation can be simulated by a ${\bf P}^{\mathcal{C}}$ computation. Therefore, ${\bf P}^{\bf PP}\subseteq{\bf P}^{\mathcal{C}}$. Second, since $\mathcal{C}\subseteq{\bf P}^{\bf PP}$, any oracle query to a language in $\mathcal{C}$ can be simulated by a deterministic polynomial-time computation with access to a ${\bf PP}$ oracle. Therefore, ${\bf P}^{\mathcal{C}}\subseteq{\bf P}^{{\bf P}^{\bf PP}}$. However, nested deterministic polynomial-time oracle computations collapse, so ${\bf P}^{{\bf P}^{\bf PP}}={\bf P}^{\bf PP}$. Thus, ${\bf P}^{\mathcal{C}}\subseteq{\bf P}^{\bf PP}$. Combining the two containments gives ${\bf P}^{\mathcal{C}}={\bf P}^{\bf PP}$. 

Since $\mathcal{C}$ was any one of $\{{\bf B_{|0|=|1|}P},{\bf B_{|0|>|1|}P},{\bf B_{|0|<|1|}P}\}$, then we can conclude that ${\bf P}^{{\bf B_{|0|=|1|}P}}={\bf P}^{{\bf B_{|0|>|1|}P}}={\bf P}^{{\bf B_{|0|<|1|}P}}={\bf P}^{\bf PP}$.$\qed$

\end{proof}

\subsection{Parity based bit-counting complexity classes}

We establish the location of the parity based bit-counting complexity classes ${\bf B_{|0| \oplus}P}$ and ${\bf B_{|1| \oplus}P}$ within the complexity hierarchy with the following ten theorems. 

\begin{theorem}
${\bf NP}\subseteq {\bf B_{|1| \oplus}P}$
\end{theorem}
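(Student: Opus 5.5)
We need a \#P function $H$ such that $x\in L$ exactly when $H(x)$ has an odd number of 1 bits. Let $L\in{\bf NP}$ be witnessed by a machine $M$ with exactly $m=m(|x|)$ non-deterministic bits and $A(x)=\#\operatorname{acc}_M(x)$, so that $x\in L\iff A(x)>0$ and $0\le A(x)\le 2^m$.

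The raw count $A(x)$ does not work directly, because its popcount parity is arbitrary. The plan is instead to build a modified machine $M'$ whose accepting paths are precisely the $y\in\{0,1\}^m$ that are \emph{lexicographically first} accepting paths of $M$. This is impossible in polynomial time, since it is a coNP condition. We therefore have to exploit the fact that only the binary profile of the count matters.

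The cleaner idea is to map $A(x)$ through a \#P-computable transformation whose output is $0$ when $A(x)=0$ and has odd popcount whenever $A(x)\ge 1$. We use blocks of fixed width. Take
\[ H(x)=A(x)\cdot 2^{m+1}+\bigl(2^{m+1}-A(x)\bigr)=A(x)\,(2^{m+1}-1)+2^{m+1}. \]
This is in \#P: the term $A(x)\cdot 2^{m+1}$ comes from guessing extra bits on accepting paths, and $2^{m+1}-A(x)=2^m+R(x)$ is handled as in Theorem~3 using rejecting paths plus $2^m$ padding paths, with closure under addition. For $A(x)=a\ge 1$, the high block is $a$ and the low $(m+1)$-bit block is $2^{m+1}-a$. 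The low block is the two's-complement of $a$ in width $m+1$, which equals $\overline{a-1}$ in $m+1$ bits. Its popcount is $(m+1)-\mathrm{pop}(a-1)$, so the total popcount is $\mathrm{pop}(a)+(m+1)-\mathrm{pop}(a-1)$. Now $\mathrm{pop}(a)-\mathrm{pop}(a-1)=1-v_2(a)$, where $v_2(a)$ counts trailing zeros of $a$. Hence the total popcount is $m+2-v_2(a)$, and its parity still depends on $a$. So this naive block trick fails, and the main obstacle is exactly to neutralize this dependence on $v_2(a)$.

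I would first try to repair the construction by replacing $a$ with a quantity whose 2-adic behaviour is fixed, such as $2a+1$ or $a\cdot 2^{k}$. Using $a'=2A(x)+1$ (odd, so $v_2(a')=0$) in the high block, with low block $2^{m+2}-a'$, gives popcount $m+3$ for every $A(x)\ge 0$. This is again constant, so it fails to distinguish $A(x)=0$ from $A(x)>0$. To break that symmetry, I would add a term present only when $A(x)=0$. That cannot be done directly, since it would require coNP-type counting.

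The fallback is to look for a polynomial $p(a)$ with nonnegative coefficients on $\{0,\dots,2^m\}$, realizable in \#P through products of $A$ with itself (\#P is closed under multiplication), such that $p(0)$ has even popcount and $p(a)$ has odd popcount for $a\ge1$. If no such polynomial is found, I would rely on the other direction already available in the paper: ${\bf B_{|1|=0}P}={\bf CoNP}$ suggests using a complementing construction, namely showing ${\bf CoNP}\subseteq{\bf B_{|1|\oplus}P}$ first and then transferring. I expect this step, forcing odd popcount uniformly over all $a\ge 1$, to be the genuine difficulty, and I am not confident a simple fixed-width encoding achieves it.
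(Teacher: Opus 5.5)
There is a genuine gap: you never exhibit a \#P function $H$ that works. Both constructions you analyze fail, and your computations of why they fail are correct. The fallbacks are left as plans, so the statement is not proved. These were a search for a suitable polynomial $p(a)$, and proving $\mathbf{coNP}\subseteq\mathbf{B_{|1|\oplus}P}$ first and then ``transferring''. The transfer step would in fact be easy: $H\mapsto 2H+1$ stays in \#P and adds exactly one $1$-bit, so $\mathbf{B_{|1|\oplus}P}$ is closed under complement. But the paper's coNP case rests on the very trick you are missing, so that route does not bypass the obstacle.

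The missing idea is small: your first attempt fails only because of the additive $2^{m+1}$. Since $A(x)(2^{m+1}-1)+2^{m+1}=A(x)\cdot 2^{m+1}+(2^{m+1}-A(x))$, the high block is $a$ while the low block is the complement of $a-1$. That mismatch is exactly what produces the $v_2(a)$ term. Drop the offset and take $H(x)=(2^r-1)A(x)$ with $r$ odd and $r\ge m$; the paper takes $r=2p+1$. For $a=A(x)\ge 1$ one has $(2^r-1)a=(a-1)2^r+\bigl((2^r-1)-(a-1)\bigr)$ with $0\le a-1<2^r$. So the high part is $a-1$ and the low $r$ bits are precisely its bitwise complement, with no carries. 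Hence $B_1(H(x))=B_1(a-1)+\bigl(r-B_1(a-1)\bigr)=r$, which is odd independently of $a$. For $a=0$, $H(x)=0$ and $B_1(H(x))=0$. So the asymmetry between $A(x)=0$ and $A(x)>0$, which you thought needed a term ``present only when $A(x)=0$,'' comes for free from the purely multiplicative form. This is the linear polynomial your fallback was looking for. Membership in \#P is also simpler than in your version: on each accepting path of $M$, guess $z\in\{0,1\}^r$ and accept iff $z\neq 0^r$, so no rejecting-path counts or padding are needed. This is exactly the paper's proof.
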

\begin{proof}
We use the following definition of ${\bf B_{|1| \oplus}P}$. A language $L$ is in ${\bf B_{|1| \oplus}P}$ if there exists a ${\bf \#P}$ function $H(x)$ such that $x\in L \iff B_1(H(x))\not\equiv 0 \pmod 2$, where $B_1(n)$ denotes the number of $1$'s bits in the binary representation of $n$.
Let $L\in{\bf NP}$. Then there is a polynomial-time predicate $R$ and a polynomial $p$ such that, for every input $x$, $x\in L \iff \exists y \text{ such that } |y|=p(|x|) \text{ and } R(x,y)$. Define $f(x)=||\{y \mid |y|=p(|x|) \wedge R(x,y)\}||$. Then $f(x)$ is a ${\bf \#P}$ function and $x\in L \iff f(x)>0$. We will write $p=p(|x|)$ for readability. We define $r=2p+1$, which is odd. We also have $f(x)<2^r$, since $f(x)\le 2^p$ and $p<r$. We next define $H(x)=(2^r-1)f(x)$. 

We first verify that $H(x)$ is a valid ${\bf \#P}$ function. Construct a non-deterministic polynomial-time Turing machine as follows. On input $x$, it first guesses a string $y\in\{0,1\}^{p}$. If $R(x,y)$ is false, it rejects. If $R(x,y)$ is true, it guesses another string $z\in\{0,1\}^{r}$ and accepts if and only if $z\neq 00\cdots 0$. For each accepting witness $y$, there are exactly $2^r-1$ accepting choices of $z$. Therefore, the total number of accepting paths is exactly $(2^r-1)f(x)$. So $H(x)\in{\bf \#P}$. 

We next prove the main theorem with the following two lemmas.

\begin{lemma}
If $x\notin L$, then $B_1(H(x))\equiv 0 \pmod 2$.
\end{lemma}
\begin{proof}
Suppose $x\notin L$. Then $f(x)=0$ and $H(x)=0$. The binary representation of $0$ is $0$, so $B_1(H(x))=0$. Therefore, $B_1(H(x))\equiv 0 \pmod 2$.$\qed$
\end{proof}

\begin{lemma}
If $x\in L$, then $B_1(H(x))\not\equiv 0 \pmod 2$.
\end{lemma}
\begin{proof}
Suppose $x\in L$. Then $f(x)>0$, so $1\le f(x)<2^r$. We can rewrite $H(x)$ as $H(x)=f(x)2^r-f(x)=(f(x)-1)2^r+(2^r-f(x))$. We can then write $f(x)-1$ as an $r$-bit string $a_{r-1}a_{r-2}\cdots a_0$, because $0\le f(x)-1<2^r$. Also, $2^r-f(x)=2^r-1-(f(x)-1)$, so the $r$-bit string for $2^r-f(x)$ is the bitwise complement $\overline{a}_{r-1}\overline{a}_{r-2}\cdots\overline{a}_0$, where $\overline{a}_i=1-a_i$. Therefore, $H(x)$ is represented by the two consecutive $r$-bit blocks $\underbrace{a_{r-1}a_{r-2}\cdots a_0}_{f(x)-1}\underbrace{\overline{a}_{r-1}\overline{a}_{r-2}\cdots\overline{a}_0}_{2^r-f(x)}$. 

Note that the term $(f(x)-1)2^r$ shifts the first block $r$ positions to the left, and the term $2^r-f(x)$ occupies the lower $r$ positions. For each $i$, exactly one of $a_i$ and $\overline{a}_i$ is equal to $1$. So the two blocks contain exactly $r$ many $1$'s altogether. Thus $B_1(H(x))=r$, which is odd since $r=2p+1$. Therefore, $B_1(H(x))\not\equiv 0 \pmod 2$.$\qed$
\end{proof}

By the two lemmas, $x\in L \iff B_1(H(x))\not\equiv 0 \pmod 2$. Therefore, $L\in{\bf B_{|1| \oplus}P}$. Since $L\in{\bf NP}$ was any $L$, then we can conclude that ${\bf NP}\subseteq {\bf B_{|1| \oplus}P}$.$\qed$
\end{proof}

\begin{theorem}
${\bf NP}\subseteq {\bf B_{|0| \oplus}P}$
\end{theorem}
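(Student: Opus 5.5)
The plan is to mirror the construction used for ${\bf NP}\subseteq {\bf B_{|1| \oplus}P}$, with one change. Under the paper's convention the binary representation of $0$ is the single bit $0$, so $B_0(0)=1$ is odd. The function $H(x)=(2^r-1)f(x)$ used there would therefore put every $x\notin L$ into the language. To avoid this, I will add a leading $1$ in a fixed high position, so that $H(x)$ is never $0$ and all $2r$ lower positions are counted as genuine bits.

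Concretely, let $L\in{\bf NP}$ with predicate $R$ and polynomial $p$, and let $f(x)=||\{y \mid |y|=p(|x|)\wedge R(x,y)\}||$, so that $x\in L\iff f(x)>0$. Take $r=2p+1$, which is odd, so that $f(x)\le 2^p<2^r$. Define
\[
H(x)=2^{2r}+(2^r-1)f(x).
\]
Membership in ${\bf \#P}$ follows from the machine in the previous theorem, which produces $(2^r-1)f(x)$ accepting paths. The constant $2^{2r}$ is added by a separate branch that guesses $2r$ bits and always accepts. Closure of ${\bf \#P}$ under addition then gives $H\in{\bf \#P}$.

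For the two cases I would argue as follows.

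\emph{Case $x\notin L$.} Then $f(x)=0$, so $H(x)=2^{2r}$. Its binary representation is a $1$ followed by $2r$ zeros, so $B_0(H(x))=2r$, which is even.

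\emph{Case $x\in L$.} Then $1\le f(x)<2^r$, and as in the previous lemma
\[
(2^r-1)f(x)=(f(x)-1)2^r+\bigl(2^r-1-(f(x)-1)\bigr).
\]
This is an $r$-bit block $a_{r-1}\cdots a_0$ encoding $f(x)-1$, followed by its bitwise complement. The total is less than $2^{2r}$, so adding $2^{2r}$ causes no carry. Hence the binary representation of $H(x)$ is $1\,a_{r-1}\cdots a_0\,\overline{a}_{r-1}\cdots\overline{a}_0$, of length exactly $2r+1$, and any leading zeros of $a$ are real bits because they sit below the leading $1$. Each complementary pair $(a_i,\overline{a}_i)$ contributes exactly one zero, so $B_0(H(x))=r$, which is odd.

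The only delicate point is the one already flagged: the treatment of $0$ and of leading zeros. Both are handled by the padding term $2^{2r}$, which fixes the binary length at exactly $2r+1$. With that in place, the two cases give $x\in L\iff B_0(H(x))\not\equiv 0 \pmod 2$, hence $L\in{\bf B_{|0| \oplus}P}$.
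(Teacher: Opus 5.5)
Your proposal is correct and is essentially the paper's proof. The paper also takes $r=2p+1$, $\ell=2r$ and $H(x)=2^{\ell}+(2^r-1)f(x)$, and uses the same complementary-block argument to get $B_0(H(x))=\ell$ (even) when $f(x)=0$ and $B_0(H(x))=\ell-r=r$ (odd) when $f(x)>0$. Your explicit remark that the padding term is needed because $B_0(0)=1$ is a useful motivation that the paper leaves implicit.
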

\begin{proof}
We use the following definition of ${\bf B_{|0| \oplus}P}$. A language $L$ is in ${\bf B_{|0| \oplus}P}$ if there exists a ${\bf \#P}$ function $H(x)$ such that $x\in L \iff B_0(H(x))\not\equiv 0 \pmod 2$, where $B_0(n)$ denotes the number of $0$'s bits in the binary representation of $n$.
Let $L\in{\bf NP}$. Then there is a polynomial-time predicate $R$ and a polynomial $p$ such that, for every input $x$, $x\in L \iff \exists y \text{ such that } |y|=p(|x|) \text{ and } R(x,y)$. Define $f(x)=||\{y \mid |y|=p(|x|) \wedge R(x,y)\}||$. Then $f(x)$ is a ${\bf \#P}$ function and $x\in L \iff f(x)>0$. We will use $p=p(|x|)$ for readability. We define $r=2p+1$ and $\ell=2r$. Then $r$ is odd and $\ell$ is even. We also have $(2^r-1)f(x)<2^{r+p}<2^\ell$ since $f(x)\le2^p$. We next define $H(x)=2^\ell+(2^r-1)f(x)$.

We first verify that $H(x)$ is a valid ${\bf \#P}$ function. The term $2^\ell$ is a ${\bf \#P}$ function, because a non-deterministic polynomial-time Turing machine can guess a string of length $\ell$ and accept on every branch. The term $(2^r-1)f(x)$ is also a ${\bf \#P}$ function such that on input $x$, guess a string $y\in\{0,1\}^{p}$. If $R(x,y)$ is false, reject. If $R(x,y)$ is true, guess a string $z\in\{0,1\}^{r}$ and accept if and only if $z\neq 00\cdots 0$. For each accepting witness $y$, there are exactly $2^r-1$ accepting choices of $z$, so the total number of accepting paths is $(2^r-1)f(x)$. Since ${\bf \#P}$ is closed under addition, we have $H(x)=2^\ell+(2^r-1)f(x)\in{\bf \#P}$.

We next prove the main theorem with the following two lemmas.

\begin{lemma}
If $x\notin L$, then $B_0(H(x))\equiv 0 \pmod 2$.
\end{lemma}
\begin{proof}
Suppose $x\notin L$. Then $f(x)=0$ and $H(x)=2^\ell$. The binary representation of $2^\ell$ is a single $1$ followed by $\ell$ many $0$'s. Thus, $B_0(H(x))=\ell$, which is even since $\ell=2r=4p+2$. Therefore,  $B_0(H(x))\equiv 0 \pmod 2$.$\qed$
\end{proof}

\begin{lemma}
If $x\in L$, then $B_0(H(x))\not\equiv 0 \pmod 2$.
\end{lemma}
\begin{proof}
Suppose $x\in L$. Then $f(x)>0$, so $1\le f(x)<2^r$. Let $K=(2^r-1)f(x)$, so that $H(x)=2^\ell+K$. We first show that $K$ has exactly $r$ many $1$'s in its binary representation. We can rewrite $K$ as $K=(2^r-1)f(x)=f(x)2^r-f(x)=(f(x)-1)2^r+(2^r-f(x))$. We can then write $f(x)-1$ as an $r$-bit string $a_{r-1}a_{r-2}\cdots a_0$, because $0\le f(x)-1<2^r$. Also, $2^r-f(x)=2^r-1-(f(x)-1)$, so the $r$-bit string for $2^r-f(x)$ is the bitwise complement $\overline{a}_{r-1}\overline{a}_{r-2}\cdots\overline{a}_0$, where $\overline{a}_i=1-a_i$. Therefore, $K$ is represented by the two consecutive $r$-bit blocks $\underbrace{a_{r-1}a_{r-2}\cdots a_0}_{f(x)-1}\underbrace{\overline{a}_{r-1}\overline{a}_{r-2}\cdots\overline{a}_0}_{2^r-f(x)}$. 

Note that the term $(f(x)-1)2^r$ shifts the first block $r$ positions to the left, and the term $2^r-f(x)$ occupies the lower $r$ positions. For each $i$, exactly one of $a_i$ and $\overline{a}_i$ is equal to $1$. So the two blocks contain exactly $r$ many $1$'s altogether. Thus $K$ has exactly $r$ many $1$'s in its binary representation. Since $K<2^\ell$, the binary representation of $H(x)=2^\ell+K$ has a leading $1$ in position $\ell$, followed by the actual lower-order bit positions $0,1,\dots,\ell-1$. These lower $\ell$ positions contain exactly the bits of $K$ in their corresponding positions, and every lower position not occupied by a $1$ is a genuine $0$-bit of $H(x)$. Since exactly $r$ of these lower positions are $1$'s, the remaining $\ell-r$ lower positions are $0$'s. Thus $B_0(H(x))=\ell-r=(4p+2)-(2p+1)=2p+1$, which is odd. Therefore, $B_0(H(x))\not\equiv 0 \pmod 2$.$\qed$
\end{proof}

By the two lemmas, $x\in L \iff B_0(H(x))\not\equiv 0 \pmod 2$. Therefore, $L\in{\bf B_{|0| \oplus}P}$. Since $L\in{\bf NP}$ was for any $L$, then we can conclude that ${\bf NP}\subseteq {\bf B_{|0| \oplus}P}$.$\qed$
\end{proof}

\begin{theorem}
${\bf coNP}\subseteq {\bf B_{|1|\oplus}P}$
\end{theorem}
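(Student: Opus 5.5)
We need coNP ⊆ B_{|1|⊕}P: a #P function H with x∈L iff B_1(H(x)) is odd, where x∈L iff f(x)=0. The plan is to reuse the gadget from the proof of ${\bf NP}\subseteq {\bf B_{|1| \oplus}P}$ and shift it by a fixed power of two, so that the parity flips between the two cases. Let $f(x)=||\{y \mid |y|=p(|x|) \wedge R(x,y)\}||$ for the coNP predicate $R$, so that $x\in L\iff f(x)=0$. Set $r=2p+2$, which is even, and $\ell=2r$. Then define
$$H(x)=2^\ell+(2^r-1)f(x).$$

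First I will check that $H\in{\bf \#P}$. The term $2^\ell$ is produced by a machine that guesses $\ell$ bits and always accepts. The term $(2^r-1)f(x)$ is produced by the machine from the earlier NP proofs: guess $y$, check $R(x,y)$, then guess $z\in\{0,1\}^r$ and accept iff $z\neq 0^r$. Closure of ${\bf \#P}$ under addition then gives $H\in{\bf \#P}$.

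Next I will do the case analysis.
\begin{itemize}
\item If $x\in L$, then $f(x)=0$, so $H(x)=2^\ell$. This has exactly one $1$-bit, so $B_1(H(x))=1$, which is odd.
\item If $x\notin L$, then $1\le f(x)\le 2^p<2^r$. By the complement-block argument from the earlier lemma, $K=(2^r-1)f(x)=(f(x)-1)2^r+(2^r-f(x))$ consists of two $r$-bit blocks that are bitwise complements, so $K$ has exactly $r$ ones.
\end{itemize}

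The main obstacle is making sure that adding $2^\ell$ to $K$ causes no carry interaction. For this I need $K<2^\ell$, which holds because $K<2^{r+p}<2^{2r}=2^\ell$. Hence the leading $1$ at position $\ell$ is disjoint from the bits of $K$. This gives $B_1(H(x))=1+r$, which is odd when $r$ is even. That parity is wrong: I need $B_1(H(x))$ even when $x\notin L$.

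So I will fix the construction by taking $r=2p+1$ (odd), which makes $1+r=2p+2$ even. The condition $K<2^{r+p}<2^\ell$ with $\ell=2r$ still holds, since $r+p<2r$. Then:
\begin{itemize}
\item $x\in L$ gives $B_1(H(x))=1$, which is odd;
\item $x\notin L$ gives $B_1(H(x))=r+1$, which is even.
\end{itemize}
This yields $x\in L\iff B_1(H(x))\not\equiv 0 \pmod 2$, and therefore $L\in{\bf B_{|1|\oplus}P}$.
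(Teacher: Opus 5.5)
Your final construction with $r=2p+1$ is correct and is essentially the paper's proof: the same $H(x)=2^\ell+(2^r-1)f(x)$, and the same complementary-block count giving $B_1(H(x))=1$ versus $B_1(H(x))=1+r$. The only difference is that you take $\ell=2r=4p+2$ where the paper takes $\ell=4p+3$, which is harmless because only $K<2^\ell$ matters when counting $1$'s; in a write-up, drop the abandoned $r=2p+2$ attempt and present the $r=2p+1$ version directly.
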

\begin{proof}
Let $L\in{\bf coNP}$. Then there is a polynomial-time predicate $R$ and a polynomial $p$ such that, for every input $x$, $x\in L \iff \nexists y \text{ such that } |y|=p(|x|) \text{ and } R(x,y)$. Define $f(x)=||\{y\mid |y|=p(|x|)\wedge R(x,y)\}||$. Then $f(x)$ is a ${\bf \#P}$ function and $x\in L \iff f(x)=0$. We will use $p=p(|x|)$ for readability. We define $r=2p+1$ and $\ell=4p+3$. Then both $r$  and $\ell$ are odd. We have $(2^r-1)f(x)<2^{r+p}<2^\ell$, since $f(x)\le 2^p$. We next define $H(x)=2^\ell+(2^r-1)f(x)$.

We first verify that $H(x)$ is a valid ${\bf \#P}$ function. The term $2^\ell$ is a ${\bf \#P}$ function, since a non-deterministic polynomial-time Turing machine can guess a string of length $\ell$ and accept on every branch. The term $(2^r-1)f(x)$ is also a ${\bf \#P}$ function such that for each witness $y$ counted by $f(x)$, guess a string $z\in\{0,1\}^r$ and accept if and only if $z\neq 00\cdots 0$. This gives exactly $2^r-1$ accepting paths for each such $y$. Since ${\bf \#P}$ is closed under addition, $H(x)\in{\bf \#P}$. 

We next prove the main theorem with the following two lemmas.

\begin{lemma}
If $x\in L$, then $B_1(H(x))\not\equiv 0 \pmod 2$.
\end{lemma}
\begin{proof}
Suppose $x\in L$. Then $f(x)=0$ and $H(x)=2^\ell$. The binary representation of $2^\ell$ is a single $1$ followed by $\ell$ many $0$'s. Thus $B_1(H(x))=1$, which is odd. Therefore, $B_1(H(x))\not\equiv 0 \pmod 2$.$\qed$
\end{proof}

\begin{lemma}
If $x\notin L$, then $B_1(H(x))\equiv 0 \pmod 2$.
\end{lemma}
\begin{proof}
Suppose $x\notin L$. Then $f(x)>0$, so $1\le f(x)<2^r$. Let $K=(2^r-1)f(x)$, so that $H(x)=2^\ell+K$. We first show that $K$ has exactly $r$ many $1$'s in its binary representation. We can rewrite $K$ as $K=(2^r-1)f(x)=f(x)2^r-f(x)=(f(x)-1)2^r+(2^r-f(x))$. We can then write $f(x)-1$ as an $r$-bit string $a_{r-1}a_{r-2}\cdots a_0$, because $0\le f(x)-1<2^r$. Also, $2^r-f(x)=2^r-1-(f(x)-1)$, so the $r$-bit string for $2^r-f(x)$ is the bitwise complement $\overline{a}_{r-1}\overline{a}_{r-2}\cdots\overline{a}_0$, where $\overline{a}_i=1-a_i$. Therefore, $K$ is represented by the two consecutive $r$-bit blocks $\underbrace{a_{r-1}a_{r-2}\cdots a_0}_{f(x)-1}\underbrace{\overline{a}_{r-1}\overline{a}_{r-2}\cdots\overline{a}_0}_{2^r-f(x)}$.

Note that the term $(f(x)-1)2^r$ shifts the first block $r$ positions to the left, and the term $2^r-f(x)$ occupies the lower $r$ positions. For each $i$, exactly one of $a_i$ and $\overline{a}_i$ is equal to $1$. So the two blocks contain exactly $r$ many $1$'s altogether. Thus $K$ has exactly $r$ many $1$'s in its binary representation. Since $K<2^\ell$, adding $2^\ell$ creates one new leading $1$ and does not interfere with the lower-order bits. Thus, $B_1(H(x))=1+r$ is even, since $r=2p+1$ is odd. Therefore, $B_1(H(x))\equiv 0 \pmod 2$.$\qed$
\end{proof}

By the two lemmas, $x\in L \iff B_1(H(x))\not\equiv 0 \pmod 2$. Therefore, $L\in{\bf B_{|1|\oplus}P}$. Since $L\in{\bf coNP}$ was any $L$, then we can conclude that ${\bf coNP}\subseteq {\bf B_{|1|\oplus}P}$.$\qed$
\end{proof}

\begin{theorem}
${\bf coNP}\subseteq {\bf B_{|0|\oplus}P}$
\end{theorem}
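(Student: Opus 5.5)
We plan to mirror the construction used for ${\bf coNP}\subseteq{\bf B_{|1|\oplus}P}$ and for ${\bf NP}\subseteq{\bf B_{|0|\oplus}P}$, changing only the parity of the padding exponent. Let $L\in{\bf coNP}$, with predicate $R$, polynomial $p$, and $f(x)=||\{y\mid |y|=p(|x|)\wedge R(x,y)\}||$, so that $x\in L\iff f(x)=0$. Set $r=2p+1$ (odd) and choose $\ell$ odd and large enough that $(2^r-1)f(x)<2^\ell$, for instance $\ell=4p+3$ as in the previous theorem. Define $H(x)=2^\ell+(2^r-1)f(x)$. Membership of $H$ in ${\bf \#P}$ follows exactly as before: there is a machine for the constant term $2^\ell$, a machine for $(2^r-1)f(x)$ that guesses a witness $y$ and then a nonzero $z\in\{0,1\}^r$, and ${\bf \#P}$ is closed under addition.

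We then prove two lemmas.

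\textbf{Case $x\in L$.} Here $f(x)=0$, so $H(x)=2^\ell$, which consists of a $1$ followed by $\ell$ zeros. Hence $B_0(H(x))=\ell$, which is odd, so $x$ is accepted.

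\textbf{Case $x\notin L$.} Here $1\le f(x)\le 2^p<2^r$. Reuse the block decomposition $K=(2^r-1)f(x)=(f(x)-1)2^r+(2^r-f(x))$, already established in the earlier theorems, which shows that $K$ has exactly $r$ ones. Since $K<2^\ell$, the binary representation of $H(x)$ has length exactly $\ell+1$. It consists of a leading $1$ at position $\ell$ followed by the $\ell$ lower positions carrying the bits of $K$, including any leading zeros of $K$ as genuine zero bits of $H$. Therefore $B_1(H(x))=r+1$ and $B_0(H(x))=(\ell+1)-(r+1)=\ell-r=(4p+3)-(2p+1)=2p+2$, which is even. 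So $x$ is rejected.

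The only delicate step is the bookkeeping in the second case. We must make sure that all $\ell$ lower positions count as zeros of $H(x)$, which holds because the leading $1$ fixes the length at $\ell+1$. We must also check that $K<2^\ell$ really holds, which follows from $K<2^{r+p}=2^{3p+1}<2^{4p+3}$. With $\ell$ and $r$ both odd, the two parities separate as required: $\ell$ is odd and $\ell-r$ is even. This gives $x\in L\iff B_0(H(x))\not\equiv 0\pmod 2$, and hence ${\bf coNP}\subseteq{\bf B_{|0|\oplus}P}$.
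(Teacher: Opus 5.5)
Your proposal is correct and follows the paper's proof essentially verbatim. You use the same parameters $r=2p+1$ and $\ell=4p+3$ and the same function $H(x)=2^\ell+(2^r-1)f(x)$. Your two cases match the paper's: $B_0(H(x))=\ell$ is odd when $f(x)=0$, and when $f(x)>0$ the complementary-block argument gives $B_0(H(x))=\ell-r=2p+2$, which is even.
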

\begin{proof}
Let $L\in{\bf coNP}$. Then there is a polynomial-time predicate $R$ and a polynomial $p$ such that, for every input $x$, $x\in L \iff \nexists y \text{ such that } |y|=p(|x|) \text{ and } R(x,y)$. Define $f(x)=||\{y\mid |y|=p(|x|)\wedge R(x,y)\}||$. Then $f(x)$ is a ${\bf \#P}$ function and $x\in L \iff f(x)=0$. We will use $p=p(|x|)$ for readability. We define $r=2p+1$ and $\ell=4p+3$. Then both $r$ and $\ell$ are odd. We have $(2^r-1)f(x)<2^{r+p}<2^\ell$, since $f(x)\le 2^p$. We next define $H(x)=2^\ell+(2^r-1)f(x)$.

We first verify that $H(x)$ is a valid ${\bf \#P}$ function. The term $2^\ell$ is a ${\bf \#P}$ function, since a non-deterministic polynomial-time Turing machine can guess a string of length $\ell$ and accept on every branch. The term $(2^r-1)f(x)$ is also a ${\bf \#P}$ function such that for each witness $y$ counted by $f(x)$, guess a string $z\in\{0,1\}^r$ and accept if and only if $z\neq 00\cdots 0$. This gives exactly $2^r-1$ accepting paths for each such $y$. Since ${\bf \#P}$ is closed under addition, $H(x)\in{\bf \#P}$. 

We next prove the main theorem with the following two lemmas.

\begin{lemma}
If $x\in L$, then $B_0(H(x))\not\equiv 0 \pmod 2$.
\end{lemma}
\begin{proof}
Suppose $x\in L$. Then $f(x)=0$ and $H(x)=2^\ell$. The binary representation of $2^\ell$ is a single $1$ followed by $\ell$ many $0$'s. Thus, $B_0(H(x))=\ell$ which is odd, since $\ell=4p+3$ is odd. Therefore, $B_0(H(x))\not\equiv 0 \pmod 2$.
\end{proof}

\begin{lemma}
If $x\notin L$, then $B_0(H(x))\equiv 0 \pmod 2$.
\end{lemma}
\begin{proof}
Suppose $x\notin L$. Then $f(x)>0$, so $1\le f(x)<2^r$. Let $K=(2^r-1)f(x)$, so that $H(x)=2^\ell+K$. We first show that $K$ has exactly $r$ many $1$'s in its binary representation. We can rewrite $K$ as $K=(2^r-1)f(x)=f(x)2^r-f(x)=(f(x)-1)2^r+(2^r-f(x))$. We can then write $f(x)-1$ as an $r$-bit string $a_{r-1}a_{r-2}\cdots a_0$, because $0\le f(x)-1<2^r$. Also, $2^r-f(x)=2^r-1-(f(x)-1)$, so the $r$-bit string for $2^r-f(x)$ is the bitwise complement $\overline{a}_{r-1}\overline{a}_{r-2}\cdots\overline{a}_0$, where $\overline{a}_i=1-a_i$. Therefore, $K$ is represented by the two consecutive $r$-bit blocks $\underbrace{a_{r-1}a_{r-2}\cdots a_0}_{f(x)-1}\underbrace{\overline{a}_{r-1}\overline{a}_{r-2}\cdots\overline{a}_0}_{2^r-f(x)}$. 

Note that the term $(f(x)-1)2^r$ shifts the first block $r$ positions to the left, and the term $2^r-f(x)$ occupies the lower $r$ positions. For each $i$, exactly one of $a_i$ and $\overline{a}_i$ is equal to $1$. So the two blocks contain exactly $r$ many $1$'s altogether. Thus $K$ has exactly $r$ many $1$'s in its binary representation. Since $K<2^\ell$, the binary representation of $H(x)=2^\ell+K$ has one leading $1$ in position $\ell$, followed by the actual lower-order bit positions $0,1,\dots,\ell-1$. These lower $\ell$ positions contain exactly the bits of $K$ in their corresponding positions. Since exactly $r$ of these lower positions are $1$'s, the remaining $\ell-r$ lower positions are $0$'s. Thus $B_0(H(x))=\ell-r=(4p+3)-(2p+1)=2p+2$, which is even. Therefore, $B_0(H(x))\equiv 0 \pmod 2$.$\qed$
\end{proof}

By the two lemmas, $x\in L \iff B_0(H(x))\not\equiv 0 \pmod 2$. Therefore $L\in{\bf B_{|0|\oplus}P}$. Since $L\in{\bf coNP}$ was any $L$, then we can conclude that ${\bf coNP}\subseteq {\bf B_{|0|\oplus}P}$.$\qed$
\end{proof}

\begin{theorem}
${\bf B_{|0| \oplus}P}\subseteq {\bf P}^{\bf PP}$
\end{theorem}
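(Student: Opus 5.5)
We follow the same strategy as in the proofs that ${\bf B_{|0|=|1|}P}$, ${\bf B_{|0|>|1|}P}$ and ${\bf B_{|0|<|1|}P}$ are contained in ${\bf P}^{\bf PP}$: a ${\bf P}^{\bf PP}$ machine recovers the exact value of the underlying ${\bf \#P}$ function and then evaluates the bit-counting predicate deterministically.

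Let $L\in{\bf B_{|0|\oplus}P}$. Then there is a ${\bf \#P}$ function $H(x)$ such that $x\in L\iff B_0(H(x))\not\equiv 0 \pmod 2$. As in the earlier proofs, we may assume without loss of generality that $H(x)=\#\operatorname{acc}_M(x)$ for a non-deterministic polynomial-time machine $M$ with exactly $2^{q}$ computation paths on every input of length $n$, where $q=q(n)$ is a polynomial. Hence $0\le H(x)\le 2^{q}$.

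The key step is the threshold lemma shown in the proof of ${\bf B_{|0|=|1|}P}\subseteq{\bf P}^{\bf PP}$: the language $\{(x,z): H(x)\ge z\}$, for $1\le z\le 2^q$, is in ${\bf PP}$. That proof pads $M$ with a half of $2^{q+1}$ paths that accepts exactly $2^q-z+1$ times, and compares the total against the majority threshold $2^q$. We invoke this lemma directly.

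The deterministic machine then runs a binary search over $[0,2^q]$ using this ${\bf PP}$ oracle. This takes $q+1$ queries, each query $z$ having polynomial length, so it determines $H(x)$ exactly in polynomial time. The machine writes the standard binary representation of $H(x)$, counts its $0$ bits, and accepts iff that count is odd. For $H(x)=0$ we adopt the convention, used throughout the paper, that the representation is the single bit $0$, so $B_0(0)=1$; whatever convention the definition fixes, it can be hard-coded.

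Hence $L\in{\bf P}^{\bf PP}$. There is no real obstacle: the only point needing care is that the binary search uses polynomially many adaptive queries, which a ${\bf P}^{\bf PP}$ machine permits. The same argument, with the final predicate replaced by the parity of $B_1(H(x))$, also gives ${\bf B_{|1|\oplus}P}\subseteq{\bf P}^{\bf PP}$.
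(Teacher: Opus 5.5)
Your proposal is correct and matches the paper's own proof. The paper likewise shows the threshold language $\{(x,z): H(x)\ge z\}$ is in ${\bf PP}$ via a $2^{q+1}$-path padding machine, recovers $H(x)$ by binary search over $[0,2^q]$, and accepts iff $B_0(H(x))$ is odd; your convention $B_0(0)=1$ agrees with the one the paper uses elsewhere.
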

\begin{proof}
Let $L\in{\bf B_{|0| \oplus}P}$. Then there exists a ${\bf \#P}$ function $H(x)$ such that $x\in L \iff B_0(H(x))\not\equiv 0 \pmod 2$, where $B_0(n)$ denotes the number of $0$'s bits in the standard binary representation of $n$. Since $H(x)$ is a ${\bf \#P}$ function, there is a non-deterministic polynomial-time Turing machine $M$ and a polynomial $q$ such that, without loss of generality, on every input $x$ of length $n$, the machine $M$ has exactly $2^{q(n)}$ computation paths and $H(x)=\#\operatorname{acc}_{M}(x)$. Thus, $0\le H(x)\le 2^{q(|x|)}$. We will use $q=q(|x|)$ for readability.

We claim that a ${\bf P}^{\bf PP}$ machine can recover the exact value of $H(x)$. We can observe this by considering threshold questions of the form $H(x)\ge z$, where $1\le z\le 2^q$. Each such threshold question is in ${\bf PP}$. To see this, construct an auxiliary non-deterministic machine with exactly $2^{q+1}$ computation paths as follows: One half of the paths simulate $M(x)$ and contribute exactly $H(x)$ accepting paths, while the other half accepts exactly $2^q-z+1$ paths. Therefore, the auxiliary machine has exactly $H(x)+(2^q-z+1)$ accepting paths. Its majority threshold is $2^q$, and $H(x)+(2^q-z+1)>2^q$ if and only if $H(x)\ge z$. Thus, the threshold language $\{(x,z):H(x)\ge z\}$ is in ${\bf PP}$.

Using this ${\bf PP}$ oracle, a deterministic polynomial-time machine can recover $H(x)$ by binary search over the interval $0\le H(x)\le 2^q$. Once it has recovered $H(x)$, it can write down the standard binary representation of $H(x)$, compute $B_0(H(x))$, and accept exactly when $B_0(H(x))\not\equiv 0 \pmod 2$. Therefore, $L\in{\bf P}^{\bf PP}$. Since $L\in{\bf B_{|0| \oplus}P}$ was any $L$, then we can conclude that ${\bf B_{|0| \oplus}P}\subseteq {\bf P}^{\bf PP}$.$\qed$
\end{proof}

\begin{theorem}
${\bf B_{|1| \oplus}P}\subseteq {\bf P}^{\bf PP}$
\end{theorem}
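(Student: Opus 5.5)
The plan is to follow the proof of the preceding theorem for ${\bf B_{|0| \oplus}P}$ almost verbatim, replacing $B_0$ by $B_1$ in the final acceptance test. Let $L\in{\bf B_{|1| \oplus}P}$. Then there is a ${\bf \#P}$ function $H(x)$ with $x\in L \iff B_1(H(x))\not\equiv 0 \pmod 2$. As before, we fix a non-deterministic polynomial-time machine $M$ and a polynomial $q$ such that $M$ has exactly $2^{q}$ computation paths on every input of length $n$ and $H(x)=\#\operatorname{acc}_M(x)$. Here $q=q(|x|)$, so $0\le H(x)\le 2^q$.

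The first step is to reuse the threshold construction from the proof that ${\bf B_{|0|=|1|}P}\subseteq{\bf P}^{\bf PP}$. The language $\{(x,z): H(x)\ge z\}$ is in ${\bf PP}$. To see this, build an auxiliary machine with $2^{q+1}$ paths. Half of its paths simulate $M$, and half accept on exactly $2^q-z+1$ paths. It then has more than $2^q$ accepting paths exactly when $H(x)\ge z$.

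The second step is binary search over $[0,2^q]$ with this oracle. This takes $O(q)$ queries, each of polynomial length, so the deterministic machine recovers $H(x)$ exactly in polynomial time. It then writes out the standard binary representation of $H(x)$, which has at most $q+1$ bits. It counts the $1$'s and accepts exactly when $B_1(H(x))$ is odd. Hence $L\in{\bf P}^{\bf PP}$.

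There is no real obstacle here; the only point needing care is already handled in the earlier theorem. That point is the construction of the second half of the auxiliary machine with exactly $2^q-z+1$ accepting paths, where $z$ is part of the query input. The machine guesses a $q$-bit string $w$, read as an integer, and accepts iff $w\le 2^q-z$, which gives exactly $2^q-z+1$ accepting paths.

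If we want a shortcut, we can also note that for $H(x)\ge 1$ we have $B_0(H(x))+B_1(H(x))$ equal to the bit length of $H(x)$, and the bit length is computable once $H(x)$ is known. This is not needed, since the direct bit count suffices.
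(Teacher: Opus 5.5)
Your proposal is correct and follows essentially the same route as the paper. Both show the threshold language $\{(x,z): H(x)\ge z\}$ is in ${\bf PP}$ via a $2^{q+1}$-path auxiliary machine, recover $H(x)$ by binary search with the ${\bf PP}$ oracle, and then accept exactly when $B_1(H(x))$ is odd. Your explicit construction of the $2^q-z+1$ accepting paths adds a detail the paper leaves implicit.
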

\begin{proof}
Let $L\in{\bf B_{|1| \oplus}P}$. Then there exists a ${\bf \#P}$ function $H(x)$ such that $x\in L \iff B_1(H(x))\not\equiv 0 \pmod 2$, where $B_1(n)$ denotes the number of $1$'s bits in the standard binary representation of $n$. Since $H(x)$ is a ${\bf \#P}$ function, there is a non-deterministic polynomial-time Turing machine $M$ and a polynomial $q$ such that, without loss of generality, on every input $x$ of length $n$, the machine $M$ has exactly $2^{q(n)}$ computation paths and $H(x)=\#\operatorname{acc}_{M}(x)$. Thus, $0\le H(x)\le 2^{q(|x|)}$. We will use $q=q(|x|)$ for readability.

As shown in the previous theorem, threshold questions of the form $H(x)\ge z$, where $1\le z\le 2^q$, are in ${\bf PP}$. Therefore, using a ${\bf PP}$ oracle, a deterministic polynomial-time machine can recover the exact value of $H(x)$ by binary search over the interval $0\le H(x)\le 2^q$. Once it has recovered $H(x)$, the machine can write down the standard binary representation of $H(x)$, compute $B_1(H(x))$, and accept exactly when $B_1(H(x))\not\equiv 0 \pmod 2$. Therefore, $L\in{\bf P}^{\bf PP}$. Since $L\in{\bf B_{|1| \oplus}P}$ was any $L$, then we can conclude that ${\bf B_{|1| \oplus}P}\subseteq {\bf P}^{\bf PP}$.$\qed$
\end{proof}
\newpage

\begin{theorem}
${\bf B_{|1| \oplus}P}\subseteq {\bf P}^{{\bf B_{|0| \oplus}P}}$.
\end{theorem}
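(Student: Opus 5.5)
Given $L\in{\bf B_{|1|\oplus}P}$ with ${\bf \#P}$ witness $H(x)$, so that $x\in L\iff B_1(H(x))\not\equiv 0\pmod 2$, the plan is to make a single query to a ${\bf B_{|0|\oplus}P}$ oracle on a modified count $G(x)$. We choose $G$ so that $B_0(G(x))$ and $B_1(H(x))$ have a parity relation that does not depend on $x$. The relevant identity is $B_0(n)+B_1(n)=\lambda(n)$, the binary length of $n$. Since $\lambda(H(x))$ is unknown, we first pad $H$ to a fixed length. As in the earlier theorems, take $M$ with exactly $2^q$ paths, so $0\le H(x)\le 2^q$, and set
\[
G(x)=2^{q+1}+H(x).
\]
This is a ${\bf \#P}$ function by closure under addition; the term $2^{q+1}$ comes from guessing $q+1$ bits and accepting on every branch. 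Because $H(x)<2^{q+1}$, the number $G(x)$ has binary length exactly $q+2$: a leading $1$ followed by $H(x)$ written in $q+1$ bits with leading zeros.

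Now count bits. We have $B_1(G(x))=1+B_1(H(x))$ and $B_0(G(x))=(q+2)-B_1(G(x))=q+1-B_1(H(x))$. Hence $B_0(G(x))\equiv q+1+B_1(H(x))\pmod 2$. The case $H(x)=0$ needs no separate treatment: then $G(x)=2^{q+1}$ has $q+1$ zeros, which agrees with the formula.

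Define $L'=\{x: B_0(G(x))\not\equiv 0\pmod 2\}$, which lies in ${\bf B_{|0|\oplus}P}$. The oracle machine, on input $x$, computes $q=q(|x|)$ in polynomial time and queries whether $x\in L'$.
\begin{itemize}
\item If $q+1$ is even, it accepts iff the answer is yes.
\item If $q+1$ is odd, it accepts iff the answer is no.
\end{itemize}
By the parity identity, this accepts exactly when $B_1(H(x))$ is odd.

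The only delicate point is the \emph{normalization}: $M$ must have exactly $2^{q(n)}$ paths, so that $H(x)\le 2^q$ and the padding term fixes the length. This is the same without-loss-of-generality assumption used in the preceding ${\bf P}^{\bf PP}$ containment proofs. Alternatively, we can avoid the case split on $q$ by using the padding $2^{q+2}$ when $q$ is even, which makes the total length odd. In that variant the query is a many-one reduction, and the ${\bf P}$-oracle framing holds trivially.
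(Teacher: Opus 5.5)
Your argument is exactly the paper's: pad to $G(x)=2^{q+1}+H(x)$, use $B_0(G(x))=(q+1)-B_1(H(x))$, and make one query to $\{x : B_0(G(x))\not\equiv 0 \pmod 2\}$, flipping the answer according to the parity of $q+1$. Your closing variant, which pads with $2^{k}$ for the even $k\in\{q+1,q+2\}$, is also correct and gives the stronger direct containment ${\bf B_{|1|\oplus}P}\subseteq{\bf B_{|0|\oplus}P}$, since then $x\in L\iff B_0(G(x))$ is odd; the paper does not observe this.
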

\begin{proof}
Let $L\in{\bf B_{|1| \oplus}P}$. Then there exists a ${\bf \#P}$ function $H(x)$ such that $x\in L \iff B_1(H(x))\not\equiv 0 \pmod 2$, where $B_1(n)$ denotes the number of $1$'s bits in the binary representation of $n$.
Since $H(x)$ is a ${\bf \#P}$ function, there is a non-deterministic polynomial-time Turing machine $M$ and a polynomially bounded function $q(n)$ such that, on every input $x$ of length $n$, the machine $M$ uses exactly $q(n)$ non-deterministic bits and $H(x)=\#\operatorname{acc}_{M}(x)$. Thus, $0\le H(x)\le 2^{q(|x|)}$. We will use $q=q(|x|)$ for readability.

We now define a new ${\bf \#P}$ function $G(x)=2^{q+1}+H(x)$.
The function $G(x)$ is in ${\bf \#P}$, because $H(x)$ is in ${\bf \#P}$ and the length-dependent term $2^{q+1}$ is in ${\bf \#P}$. Indeed, a non-deterministic polynomial-time Turing machine can guess a string of length $q+1$ and accept on every branch, giving exactly $2^{q+1}$ accepting paths. Since ${\bf \#P}$ is closed under addition, $G(x)\in{\bf \#P}$.

We now consider the following language $O=\{x:B_0(G(x))\not\equiv 0 \pmod 2\}$. By definition, $O\in{\bf B_{|0| \oplus}P}$. We will show that $L$ can be decided in polynomial time with oracle access to $O$.
Since $0\le H(x)\le 2^q$, the value $H(x)$ only affects the lower-order bit positions $0,1,\dots,q$ of $G(x)$. Therefore, the standard binary representation of $G(x)=2^{q+1}+H(x)$ has a leading $1$ in position $q+1$, followed by exactly $q+1$ lower-order bit positions encoding $H(x)$. Among these $q+1$ lower-order positions, exactly $B_1(H(x))$ positions contain $1$'s. Therefore, $B_0(G(x))=(q+1)-B_1(H(x))$. Taking this equation modulo $2$, we get $B_0(G(x))\equiv q+1-B_1(H(x)) \pmod 2$. Since subtraction is the same as addition modulo $2$, this is equivalent to $B_0(G(x))\equiv q+1+B_1(H(x)) \pmod 2$. Therefore, $B_1(H(x))\equiv B_0(G(x))+(q+1) \pmod 2$.

A deterministic polynomial-time oracle machine can decide $L$ as follows. On input x, it computes $q = q(|x|)$ and queries the oracle $O$ on input $x$. If $q+1$ is even, then $B_1(H(x))$ and $B_0(G(x))$ have the same parity, so then the machine accepts exactly when the oracle says yes. If $q+1$ is odd, then $B_1(H(x))$ and $B_0(G(x))$ have opposite parity, so then the machine accepts exactly when the oracle says no.

We next prove the main theorem with the following two lemmas

\begin{lemma}
If $x\in L$, then the oracle machine accepts $x$.
\end{lemma}
\begin{proof}
Assume $x\in L$. Since $L\in{\bf B_{|1| \oplus}P}$ via $H(x)$, we have $B_1(H(x))\not\equiv 0 \pmod 2$. Thus, $B_1(H(x))$ is odd. If $q+1$ is even, then $B_1(H(x))$ and $B_0(G(x))$ have the same parity. Thus, $B_0(G(x))$ is odd, so the oracle says yes, and the machine accepts. If $q+1$ is odd, then $B_1(H(x))$ and $B_0(G(x))$ have opposite parity. Thus, $B_0(G(x))$ is even, so the oracle says no, and the machine accepts by construction. Therefore, in both cases, the oracle machine accepts $x$.$\qed$
\end{proof}
\begin{lemma}
If $x\notin L$, then the oracle machine rejects $x$.
\end{lemma}
\begin{proof}
Assume $x\notin L$. Since $L\in{\bf B_{|1| \oplus}P}$ via $H(x)$, we have $B_1(H(x))\equiv 0 \pmod 2$. Thus $B_1(H(x))$ is even. If $q+1$ is even, then $B_1(H(x))$ and $B_0(G(x))$ have the same parity. Thus, $B_0(G(x))$ is even, so the oracle says no, and the machine rejects. If $q+1$ is odd, then $B_1(H(x))$ and $B_0(G(x))$ have opposite parity. Thus, $B_0(G(x))$ is odd, so the oracle says yes, and the machine rejects by construction. Therefore, in both cases, the oracle machine rejects $x$.$\qed$
\end{proof}

The two lemmas show that the deterministic polynomial-time oracle machine accepts exactly the strings in $L$. Therefore, $L\in{\bf P}^{{\bf B_{|0| \oplus}P}}$. Since $L\in{\bf B_{|1| \oplus}P}$ was for any $L$, then we can conclude that ${\bf B_{|1| \oplus}P}\subseteq {\bf P}^{{\bf B_{|0| \oplus}P}}$.$\qed$
\end{proof}

\begin{theorem}
${\bf B_{|0|\oplus}P}\subseteq {\bf P}^{{\bf B_{|1|\oplus}P}}$.
\end{theorem}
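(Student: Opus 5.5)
Mirroring the proof of the previous theorem, I will pad $H(x)$ with a fixed leading $1$ so that its bit string has a known length, and then use the identity $B_0+B_1=\text{length}$ to turn a question about $B_0(H(x))$ into one about $B_1$ of the padded count. Let $L\in{\bf B_{|0|\oplus}P}$ via a ${\bf \#P}$ function $H(x)$ with $x\in L\iff B_0(H(x))\not\equiv 0\pmod 2$. As before, $H$ is computed by a machine $M$ using exactly $q=q(|x|)$ non-deterministic bits, so $0\le H(x)\le 2^q$. I define $G(x)=2^{q+1}+H(x)$, which lies in ${\bf \#P}$ by the same argument used in the previous theorem: a machine guessing $q+1$ bits and always accepting contributes $2^{q+1}$, and ${\bf \#P}$ is closed under addition. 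The oracle language is $O=\{x: B_1(G(x))\not\equiv 0\pmod 2\}$, which is in ${\bf B_{|1|\oplus}P}$ by definition.

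The main obstacle is that the standard binary representation of $H(x)$ has variable length, so $B_0(H(x))$ is \emph{not} simply $(q+1)-B_1(H(x))$. The padded value $G(x)$ sees $H(x)$ written with leading zeros in a $(q+1)$-bit field, while $B_0(H(x))$ does not count those leading zeros. There is also the special case $H(x)=0$, whose representation ``$0$'' has one zero bit. Ignoring this is exactly the point where a naive copy of the previous proof would go wrong, so I handle it explicitly.

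The relations I will use are the following. If $H(x)\ge 1$, let $k=\lfloor\log_2 H(x)\rfloor+1$ be its bit length; then $B_0(H(x))=k-B_1(H(x))$. If $H(x)=0$, then $B_0(H(x))=1$ and $x\in L$. Meanwhile $B_1(G(x))=1+B_1(H(x))$, since $H(x)<2^{q+1}$ occupies only positions $0,\dots,q$. So $B_0(H(x))\equiv k+1+B_1(G(x))\pmod 2$, and the machine needs the parity of $k$ in addition to one query to $O$. It obtains $k$ by asking, for each $j=0,\dots,q$, whether $H(x)\ge 2^j$; these are at most $q+1$ further queries, and $H(x)=0$ is detected by the query $j=0$ failing. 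My first attempt is to answer these threshold questions with an auxiliary ${\bf B_{|1|\oplus}P}$ language, for instance one encoding the ${\bf NP}$ question ``does $M$ have an accepting path with the first $q-j$ guess bits forming a string $\ge$ a fixed prefix'' via the earlier theorem ${\bf NP}\subseteq{\bf B_{|1|\oplus}P}$.

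That attempt is suspect, however, because the counting condition $H(x)\ge 2^j$ is not an ${\bf NP}$ predicate in general. If it cannot be repaired, the fallback is to modify the construction so that the length of $H$ never matters. This would go through by first replacing $H$ with a ${\bf \#P}$ function $H'$ such that $H'(x)$ always has a known leading bit position and $B_0(H'(x))\equiv B_0(H(x))$. I expect this normalization to be the genuinely hard step, and I would check whether the paper's intended argument simply asserts $B_0(G(x))$ versus $B_1$ in the fixed-width form, as in the previous theorem. If such a normalization exists, the decision procedure makes one query $O(x)$ and flips the answer according to the parity computed from $q$ (and $k$). It then concludes with two lemmas, one for $x\in L$ and one for $x\notin L$, exactly as in the previous theorem.
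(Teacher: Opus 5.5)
Your reduction is sound as far as it goes. Since $B_1(G(x))=1+B_1(H(x))$, you correctly get $B_0(H(x))\equiv k+1+B_1(G(x))\pmod 2$ for $H(x)\ge 1$, and you correctly see that the real difficulty is learning the bit length $k$ and detecting $H(x)=0$. (The padding by $2^{q+1}$ is not needed in this direction: a ${\bf B_{|1|\oplus}P}$ query on $H(x)$ itself already returns the parity of $B_1(H(x))$.)

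However, the proposal never gives a valid way to obtain $k$, so it is not a proof. Your first attempt fails for the reason you suspect. The condition $H(x)\ge 2^j$ is a threshold on the \emph{number} of accepting paths, which is a ${\bf PP}$-type question, not an ${\bf NP}$ one. Whether some accepting path has guess bits above a fixed prefix says nothing about whether at least $2^j$ paths accept, so ${\bf NP}\subseteq{\bf B_{|1|\oplus}P}$ does not help. The fallback normalization $H'$ is only a hope. A ${\bf \#P}$ function with a known leading position and $B_0(H')\equiv B_0(H)$ would decide $L$ with a single ${\bf B_{|1|\oplus}P}$ query plus a length-dependent flip. That is a stronger claim than the Turing containment you need, and you give no construction for it.

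The missing idea is to read off the top bit position from parity information alone, using the ${\bf \#P}$ functions $H_i(x)=H(x)+2^i$ for $0\le i\le q$.

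Let $c_i$ be the length of the run of $1$'s in $H(x)$ starting at position $i$ and going upward, with $c_i=0$ if bit $i$ is $0$. Adding $2^i$ clears those $c_i$ ones and sets the next $0$, so $B_1(H_i(x))=B_1(H(x))+1-c_i$. Comparing the oracle answers on $H_i(x)$ and $H(x)$ therefore yields $c_i \bmod 2$.

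Now scan $i=q,q-1,\dots,0$, and let $h$ be the position of the leading $1$-bit. Every $i>h$ has $c_i=0$, while $c_h=1$ because bit $h+1$ is $0$. So the first $i$ with $c_i$ odd is $h$, and $k=h+1$.

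If no such $i$ exists, then $H(x)=0$, and the machine accepts since $B_0(0)=1$. The paper alternatively detects $H(x)=0$ with one query on $(2^r-1)H(x)$, $r=2q+1$, whose number of $1$'s is $r$ (odd) when $H(x)>0$ and $0$ otherwise. All these queries can be bundled into one tagged ${\bf B_{|1|\oplus}P}$ language. With this step added, your parity identity completes the proof, and the result is essentially the paper's argument.
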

\begin{proof}
Let $L\in{\bf B_{|0|\oplus}P}$. Then there exists a ${\bf \#P}$ function $H(x)$ such that $x\in L \iff B_0(H(x))\not\equiv 0 \pmod 2$. Since $H(x)$ is a ${\bf \#P}$ function, there is a non-deterministic polynomial-time Turing machine $M$ and a polynomial $q$ such that, without loss of generality, on every input $x$ of length $n$, the machine $M$ has exactly $2^{q(n)}$ computation paths and $H(x)=\#\operatorname{acc}_{M}(x)$. Thus, $0\le H(x)\le 2^{q(|x|)}$. We will use $q=q(|x|)$ for readability.

Let $\ell(n)$ denote the number of bits in the standard binary representation of $n$, with the convention that the standard binary representation of $0$ is $0$, so $\ell(0)=1$. Since $B_0(n)+B_1(n)=\ell(n)$, we have $B_0(n)\equiv \ell(n)+B_1(n)\pmod 2$. Therefore, to decide whether $B_0(H(x))$ is odd, it is enough to determine the parity of $B_1(H(x))$ and the parity of $\ell(H(x))$.

Define $r=2q+1$ and $G(x)=(2^r-1)H(x)$. Also, for each $i\in\{0,1,\dots,q\}$, define $H_i(x)=H(x)+2^i$. The functions $G(x)$, $H(x)$, and $H_i(x)$ are all ${\bf \#P}$ functions. The function $G(x)$ is obtained by multiplying each accepting path counted by $H(x)$ by $2^r-1$ accepting choices, and each $H_i(x)$ is the sum of $H(x)$ and the length-dependent constant $2^i$. By tagging the query as one of these three types, all queries can be combined into a single oracle language $O\in{\bf B_{|1|\oplus}P}$. The oracle says yes exactly when the tagged ${\bf \#P}$ value has an odd number of $1$'s in its binary representation.

The deterministic polynomial-time oracle machine works as follows. On input $x$, it first queries the oracle for $G(x)$. If $H(x)=0$, then $G(x)=0$, so $B_1(G(x))=0$, and the oracle returns no. If $H(x)>0$, then $1\le H(x)<2^r$, and by the complementary-block argument, $(2^r-1)H(x)$ has exactly $r$ many $1$'s in its binary representation. Since $r=2q+1$ is odd, the oracle returns yes. Thus, the first query returns no exactly when $H(x)=0$, and returns yes exactly when $H(x)>0$. If the first query returns no, the machine accepts immediately, because then $H(x)=0$ and $B_0(0)=1$ is odd.

If the first query returns yes, then $H(x)>0$. The machine queries the oracle for $H(x)$ itself, obtaining the parity of $B_1(H(x))$. It remains to determine the parity of $\ell(H(x))$. Let $h$ be the index of the most significant $1$-bit of $H(x)$, so $\ell(H(x))=h+1$. Since $H(x)\le 2^q$, we have $h\le q$.

To find $h$, the machine uses the queries for $H_i(x)=H(x)+2^i$. Let $c_i$ be the number of consecutive $1$'s in the binary representation of $H(x)$ starting at position $i$ and moving upward; if the bit in position $i$ is $0$, then $c_i=0$. Adding $2^i$ changes those $c_i$ consecutive $1$'s to $0$'s and changes the next $0$ to a $1$, so $B_1(H_i(x))\equiv B_1(H(x))+1+c_i\pmod 2$. Thus the oracle queries for $H_i(x)$ allow the machine to compute $c_i\pmod 2$ for every $i$.

The machine checks the indices in decreasing order, $i=q,q-1,\dots,0$, and stops at the first index for which $c_i$ is odd. This index is exactly $h$, the position of the most significant $1$-bit of $H(x)$, because for every $i>h$ the bit in position $i$ is $0$, so $c_i=0$, while at $i=h$ the bit is $1$ and the next higher bit is $0$, so $c_h=1$. Thus the machine correctly finds $h$ and computes $\ell(H(x))=h+1$. It already knows $B_1(H(x))\pmod 2$, so using $B_0(H(x))\equiv \ell(H(x))+B_1(H(x))\pmod 2$, it computes $B_0(H(x))\pmod 2$ and accepts exactly when this parity is $1$.

We next prove the main theorem with the following two lemmas.

\begin{lemma}
If $x\in L$, then the oracle machine accepts $x$.
\end{lemma}
\begin{proof}
Assume $x\in L$. Since $L\in{\bf B_{|0|\oplus}P}$ via $H(x)$, we have $B_0(H(x))\not\equiv 0\pmod 2$. Thus $B_0(H(x))$ is odd. 

Suppose $H(x)=0$. The first oracle query is the query for $G(x)=(2^r-1)H(x)$. Since $H(x)=0$, we have $G(x)=0$. Hence $B_1(G(x))=B_1(0)=0$, which is even, so the ${\bf B_{|1|\oplus}P}$ oracle returns no on this first query. By construction, when the first query for $G(x)$ returns no, the oracle machine concludes that $H(x)=0$ and accepts immediately. 

Suppose $H(x)>0$. Then the first oracle query for $G(x)$ returns yes, because $G(x)=(2^r-1)H(x)$ has exactly $r$ many $1$'s in its binary representation and $r=2q+1$ is odd. Therefore, the oracle machine continues. It then queries the oracle for $H(x)$ itself to obtain the parity of $B_1(H(x))$, and it uses the queries for $H_i(x)=H(x)+2^i$ to compute the true binary length $\ell(H(x))$. The machine then computes the parity of $B_0(H(x))$ by using the identity $B_0(H(x))\equiv \ell(H(x))+B_1(H(x))\pmod 2$. Since $x\in L$ and $L$ is recognized by the condition $B_0(H(x))\not\equiv 0\pmod 2$, the number $B_0(H(x))$ is odd. Therefore, the computed parity is $1$, and by construction the oracle machine accepts.$\qed$

\end{proof}

\begin{lemma}
If $x\notin L$, then the oracle machine rejects $x$.
\end{lemma}
\begin{proof}
Assume $x\notin L$. Since $L\in{\bf B_{|0|\oplus}P}$ via $H(x)$, we have $B_0(H(x))\equiv 0\pmod 2$. Thus $B_0(H(x))$ is even. 

Suppose $H(x)>0$. Then the first oracle query is the query for $G(x)=(2^r-1)H(x)$. Since $H(x)>0$, the number $G(x)=(2^r-1)H(x)$ has exactly $r$ many $1$'s in its binary representation. Since $r=2q+1$ is odd, the oracle returns yes on this first query. By construction, when the first query for $G(x)$ returns yes, the oracle machine concludes that $H(x)>0$ and continues. It then queries the oracle for $H(x)$ itself to obtain the parity of $B_1(H(x))$, and it uses the queries for $H_i(x)=H(x)+2^i$ to compute the true binary length $\ell(H(x))$. The machine then computes the parity of $B_0(H(x))$ by using the identity $B_0(H(x))\equiv \ell(H(x))+B_1(H(x))\pmod 2$. Since $x\notin L$ and $L$ is recognized by the condition $B_0(H(x))\not\equiv 0\pmod 2$, the number $B_0(H(x))$ is even. Therefore, the computed parity is $0$, and by construction the oracle machine rejects. $\qed$
\end{proof}

The two lemmas show that the oracle machine decides $L$ using an oracle in ${\bf B_{|1|\oplus}P}$. Therefore, $L\in{\bf P}^{{\bf B_{|1|\oplus}P}}$. Since $L\in{\bf B_{|0|\oplus}P}$ was for any $L$, then we can conclude that ${\bf B_{|0|\oplus}P}\subseteq{\bf P}^{{\bf B_{|1|\oplus}P}}$.$\qed$
\end{proof}

\begin{theorem}
${\bf P}^{{\bf B_{|0|\oplus}P}}={\bf P}^{{\bf B_{|1|\oplus}P}}$.
\end{theorem}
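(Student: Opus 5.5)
The plan is to derive the equality of the two Turing closures directly from the two preceding containments, ${\bf B_{|1|\oplus}P}\subseteq{\bf P}^{{\bf B_{|0|\oplus}P}}$ and ${\bf B_{|0|\oplus}P}\subseteq{\bf P}^{{\bf B_{|1|\oplus}P}}$. The argument follows the same pattern as the proof of the theorem ${\bf P}^{\mathcal{C}}={\bf P}^{\bf PP}$ for the comparison based classes. We show each inclusion of closures by simulating oracle queries, and then use the fact that nested deterministic polynomial-time oracle computations collapse: ${\bf P}^{{\bf P}^{A}}={\bf P}^{A}$ for any oracle set $A$, and more generally ${\bf P}^{{\bf P}^{\mathcal{D}}}={\bf P}^{\mathcal{D}}$ for a class $\mathcal{D}$ closed under the relevant combination of oracles.

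First inclusion: ${\bf P}^{{\bf B_{|1|\oplus}P}}\subseteq{\bf P}^{{\bf B_{|0|\oplus}P}}$.
\begin{itemize}
\item Take $L\in{\bf P}^{O}$ with $O\in{\bf B_{|1|\oplus}P}$, decided by an oracle machine $N$.
\item By the theorem ${\bf B_{|1|\oplus}P}\subseteq{\bf P}^{{\bf B_{|0|\oplus}P}}$, there is a polynomial-time machine $N'$ and a language $O'\in{\bf B_{|0|\oplus}P}$ with $O=L(N'^{O'})$. In that proof a single query suffices: $O'=\{x: B_0(2^{q+1}+H(x))\not\equiv 0\pmod 2\}$, and $N'$ merely possibly negates the answer.
\item Each query $w$ of $N$ to $O$ is replaced by running $N'$ on $w$ with oracle $O'$.
\item Since $N$ makes polynomially many queries of polynomial length, and $N'$ runs in time polynomial in $|w|$, the composed machine runs in polynomial time with oracle $O'$.
\end{itemize}
So $L\in{\bf P}^{O'}\subseteq{\bf P}^{{\bf B_{|0|\oplus}P}}$.

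Second inclusion: ${\bf P}^{{\bf B_{|0|\oplus}P}}\subseteq{\bf P}^{{\bf B_{|1|\oplus}P}}$. This is symmetric, using ${\bf B_{|0|\oplus}P}\subseteq{\bf P}^{{\bf B_{|1|\oplus}P}}$. Here each simulated query expands into polynomially many adaptive queries: first to $G$, then to $H$, then to the $H_i$ for $i=0,\dots,q$. These queries go to a single tagged oracle language in ${\bf B_{|1|\oplus}P}$, which is still polynomial overall.

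The main obstacle is that the second containment's oracle depends on the particular $L$ being reduced. When $N$ queries several different ${\bf B_{|0|\oplus}P}$ languages, or when one oracle language is reached through the reduction, we need one fixed ${\bf B_{|1|\oplus}P}$ oracle. The first step I would take is to note that $N$ has a single oracle $O$, so only one underlying ${\bf \#P}$ function $H$ is involved. The tagged language $\{(w,\mathrm{tag},i)\}$ from that proof is then a single ${\bf B_{|1|\oplus}P}$ language, because the nondeterministic machine can read the tag and branch to the corresponding ${\bf \#P}$ computation ($G$, $H$, or $H_i$). Both the tags and the indices $i$ are polynomially bounded in $|w|$, so this works uniformly over all query strings $w$. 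With both inclusions in place, the equality ${\bf P}^{{\bf B_{|0|\oplus}P}}={\bf P}^{{\bf B_{|1|\oplus}P}}$ follows.
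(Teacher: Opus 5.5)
Your proposal is correct and takes essentially the same approach as the paper: each inclusion of Turing closures comes from simulating every oracle query via the two preceding containments ${\bf B_{|1|\oplus}P}\subseteq{\bf P}^{{\bf B_{|0|\oplus}P}}$ and ${\bf B_{|0|\oplus}P}\subseteq{\bf P}^{{\bf B_{|1|\oplus}P}}$, followed by collapsing ${\bf P}^{{\bf P}^{\mathcal{D}}}$ to ${\bf P}^{\mathcal{D}}$. Your discussion of the single oracle $O$ and the tagged language only spells out composition details that the paper leaves implicit; no closure hypothesis on $\mathcal{D}$ is actually needed, since ${\bf P}^{\mathcal{D}}$ is the union of ${\bf P}^{A}$ over single oracles $A\in\mathcal{D}$.
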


\begin{proof}
By the previous result, we have ${\bf B_{|1|\oplus}P}\subseteq {\bf P}^{{\bf B_{|0|\oplus}P}}$. Therefore, every oracle query to a language in ${\bf B_{|1|\oplus}P}$ can be simulated by a deterministic polynomial-time computation with access to a ${\bf B_{|0|\oplus}P}$ oracle. Thus, ${\bf P}^{{\bf B_{|1|\oplus}P}}\subseteq {\bf P}^{{\bf P}^{{\bf B_{|0|\oplus}P}}}$. Since nested deterministic polynomial-time oracle computations collapse, we have ${\bf P}^{{\bf P}^{{\bf B_{|0|\oplus}P}}}={\bf P}^{{\bf B_{|0|\oplus}P}}$. Therefore, ${\bf P}^{{\bf B_{|1|\oplus}P}}\subseteq {\bf P}^{{\bf B_{|0|\oplus}P}}$.

Similarly, by the previous result, we have ${\bf B_{|0|\oplus}P}\subseteq{\bf P}^{{\bf B_{|1|\oplus}P}}$. Therefore, every oracle query to a language in ${\bf B_{|0|\oplus}P}$ can be simulated by a deterministic polynomial-time computation with access to a ${\bf B_{|1|\oplus}P}$ oracle. Thus, ${\bf P}^{{\bf B_{|0|\oplus}P}}\subseteq {\bf P}^{{\bf P}^{{\bf B_{|1|\oplus}P}}}$. Again, nested deterministic polynomial-time oracle computations collapse, so ${\bf P}^{{\bf P}^{{\bf B_{|1|\oplus}P}}}={\bf P}^{{\bf B_{|1|\oplus}P}}$. Therefore, ${\bf P}^{{\bf B_{|0|\oplus}P}}\subseteq{\bf P}^{{\bf B_{|1|\oplus}P}}$.

Combining the two containments, we get ${\bf P}^{{\bf B_{|0|\oplus}P}}={\bf P}^{{\bf B_{|1|\oplus}P}}$.
\end{proof}

\begin{theorem}

${\bf P}^{{\bf B_{|0|\oplus}P}}={\bf P}^{{\bf B_{|1|\oplus}P}}\subseteq {\bf P}^{\bf PP}$

\end{theorem}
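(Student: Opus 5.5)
The equality ${\bf P}^{{\bf B_{|0|\oplus}P}}={\bf P}^{{\bf B_{|1|\oplus}P}}$ is exactly the previous theorem, so only the containment ${\bf P}^{{\bf B_{|0|\oplus}P}}\subseteq{\bf P}^{\bf PP}$ is left to prove. The plan follows the oracle-substitution argument already used for the comparison based classes. The earlier theorem ${\bf B_{|0|\oplus}P}\subseteq{\bf P}^{\bf PP}$ says that every language in ${\bf B_{|0|\oplus}P}$ can be decided by a deterministic polynomial-time machine with a ${\bf PP}$ oracle. That machine recovers $H(x)$ by binary search with threshold queries $H(x)\ge z$ and then computes the parity of $B_0(H(x))$ directly.

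Take an arbitrary $L\in{\bf P}^{{\bf B_{|0|\oplus}P}}$, witnessed by a polynomial-time oracle machine $N$ and an oracle $O\in{\bf B_{|0|\oplus}P}$. Build a machine $N'$ that runs $N$ and answers each query ``$w\in O$?'' by running the ${\bf P}^{\bf PP}$ procedure for $O$ on $w$. Every query $w$ has length polynomial in $|x|$, because $N$ runs in polynomial time. Each simulated query therefore costs polynomial time in $|w|$, hence in $|x|$, and there are only polynomially many queries. So $N'$ runs in polynomial time. This shows ${\bf P}^{{\bf B_{|0|\oplus}P}}\subseteq{\bf P}^{{\bf P}^{\bf PP}}$.

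The nested class collapses to ${\bf P}^{\bf PP}$, as in the earlier Turing-closure theorem. All the ${\bf PP}$ queries issued by the different sub-simulations can go to one ${\bf PP}$ oracle: use the join (tagged disjoint union) of the finitely many ${\bf PP}$ languages involved, since ${\bf PP}$ is closed under this join. In fact only one threshold language $\{(x,z):H(x)\ge z\}$ is needed. Combining with the equality from the previous theorem gives ${\bf P}^{{\bf B_{|0|\oplus}P}}={\bf P}^{{\bf B_{|1|\oplus}P}}\subseteq{\bf P}^{\bf PP}$.

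As a check, one could argue symmetrically from ${\bf B_{|1|\oplus}P}\subseteq{\bf P}^{\bf PP}$ without using the equality. The main, though mild, obstacle is the polynomial-time accounting: each query's simulation must be polynomial in $|x|$, which holds because query lengths are polynomially bounded and polynomials compose. Everything else is a direct appeal to earlier theorems.
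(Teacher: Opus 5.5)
Your proposal is correct and follows the paper's proof. You take the equality from the preceding theorem, then use ${\bf B_{|0|\oplus}P}\subseteq{\bf P}^{\bf PP}$ to get ${\bf P}^{{\bf B_{|0|\oplus}P}}\subseteq{\bf P}^{{\bf P}^{\bf PP}}={\bf P}^{\bf PP}$; your remarks on polynomial-time accounting and on needing only one threshold oracle just spell out the collapse step that the paper states without detail.
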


\begin{proof}
By the previous theorem, we have ${\bf P}^{{\bf B_{|0|\oplus}P}}={\bf P}^{{\bf B_{|1|\oplus}P}}$. It remains to show that this is contained in ${\bf P}^{\bf PP}$. By the previous containment theorem, we have ${\bf B_{|0|\oplus}P}\subseteq{\bf P}^{\bf PP}$. Therefore, any oracle query to a language in ${\bf B_{|0|\oplus}P}$ can be simulated by a deterministic polynomial-time computation with access to a ${\bf PP}$ oracle. Thus, ${\bf P}^{{\bf B_{|0|\oplus}P}}\subseteq {\bf P}^{{\bf P}^{\bf PP}}$. Since nested deterministic polynomial-time oracle computations collapse, we have ${\bf P}^{{\bf P}^{\bf PP}}={\bf P}^{\bf PP}$.

Therefore, ${\bf P}^{{\bf B_{|0|\oplus}P}}\subseteq{\bf P}^{\bf PP}$. Using ${\bf P}^{{\bf B_{|0|\oplus}P}}={\bf P}^{{\bf B_{|1|\oplus}P}}$, then we can conclude that ${\bf P}^{{\bf B_{|0|\oplus}P}}={\bf P}^{{\bf B_{|1|\oplus}P}}\subseteq {\bf P}^{\bf PP}$.
\end{proof}

\begin{center}
\resizebox{\textwidth}{!}{
\begin{tikzpicture}[
    every node/.style={font=\normalsize},
    class/.style={draw, rounded corners, minimum width=2cm, minimum height=1cm, align=center},
    oracle/.style={draw, rounded corners, minimum width=5cm, minimum height=1cm, align=center},
    wideoracle/.style={draw, rounded corners, minimum width=8cm, minimum height=1cm, align=center},
    >=Latex
]

\node[wideoracle] (pcomp) at (0,5)
{${\bf P}^{\bf PP} = {\bf P}^{{\bf B_{|0|=|1|}P}}={\bf P}^{{\bf B_{|0|>|1|}P}}={\bf P}^{{\bf B_{|0|<|1|}P}}$
};

\node[oracle] (pparity) at (6.5,3.5)
{${\bf P}^{{\bf B_{|0|\oplus}P}}={\bf P}^{{\bf B_{|1|\oplus}P}}$};

\node[class] (beq)  at (-5.2,1.45) {${\bf B_{|0|=|1|}P}$};
\node[class] (bgt)  at (-2.6,1.45) {${\bf B_{|0|>|1|}P}$};
\node[class] (blt)  at (0,1.45)    {${\bf B_{|0|<|1|}P}$};
\node[class] (b0op) at (2.6,1.45)  {${\bf B_{|0|\oplus}P}$};
\node[class] (b1op) at (5.2,1.45)  {${\bf B_{|1|\oplus}P}$};

\node[class] (pp)   at (-2.6,0) {${\bf PP}$};
\node[class] (np)   at (2.2,0)  {${\bf NP} = {\bf B_{|1|>0}P} $};
\node[class] (conp) at (5.6,0)  {${\bf coNP} = {\bf B_{|1|=0}P} $};

\draw[->] (pp) -- (beq);
\draw[->] (pp) -- (bgt);
\draw[->] (pp) -- (blt);

\draw[->] (beq) -- (pcomp);
\draw[->] (bgt) -- (pcomp);
\draw[->] (blt) -- (pcomp);

\draw[->] (np) -- (b0op);
\draw[->] (np) -- (b1op);
\draw[->] (conp) -- (b0op);
\draw[->] (conp) -- (b1op);

\draw[->] (b0op) -- (pparity);
\draw[->] (b1op) -- (pparity);

\draw[->] (pparity) -- (pcomp);

\draw[->, bend left=12] (b0op) to (pcomp);
\draw[->, bend right=12] (b1op) to (pcomp);

\end{tikzpicture}

}
\end{center}
Location of our bit-counting complexity classes in the complexity hierarchy. 

\section{Conclusion}

We introduced and studied a family of complexity classes defined by applying bit-counting predicates to the binary representation of ${\bf \#P}$ functions. Our first observation was that applying simple bit-counting predicates to the binary expansion of the number of accepting path count can define some well known complexity classes, such as ${\bf NP}$, ${\bf CoNP}$ and ${\bf C_=P}={\bf MNS}$. Most interestingly, we also obtained some nontrivial complexity classes and divided them into two subcategories as comparison and parity based bit-counting complexity classes.  

The comparison based bit-counting complexity classes ${\bf B_{|0|=|1|}P}$, ${\bf B_{|0|>|1|}P}$ and ${\bf B_{|0|<|1|}P}$ capture whether the number of 0's bits and 1's bits in a ${\bf \#P}$ value are equal, zero, or one dominant. We proved that each one of these complexity classes is strong enough to contain ${\bf PP}$. We also established an upper bounds for these complexity classes. Since a ${\bf P}^{\bf PP}$ machine can recover the exact value of a ${\bf \#P}$ function using threshold queries and then inspect its binary representation directly, each one of the comparison based bit-counting complexity classes is contained in ${\bf P}^{\bf PP}$. By combining these two results, we obtained the equivalence at the Turing level ${\bf P}^{\bf PP} = {\bf P}^{{\bf B_{|0|=|1|}P}} = {\bf P}^{{\bf B_{|0|>|1|}P}}={\bf P}^{{\bf B_{|0|<|1|}P}}$.

For the parity based bit-counting complexity classes, ${\bf B_{|0|\oplus}P}$ and ${\bf B_{|1|\oplus}P}$, we proved that both contain ${\bf NP}$ and ${\bf coNP}$. These containments show that the parity predicates applied to the number of 0's bits or 1's bits of a ${\bf \#P}$ count are already powerful enough to encode existential and universal non-deterministic behavior. At the same time, both classes are contained in ${\bf P}^{\bf PP}$, again because ${\bf P}^{\bf PP}$ can reconstruct the relevant ${\bf \#P}$ count and compute the desired bit parity in polynomial time. We further showed that the two parity based bit-counting complexity classes are equivalent at the Turing level ${\bf P}^{{\bf B_{|0|\oplus}P}}={\bf P}^{{\bf B_{|1|\oplus}P}}$.

Our results demonstrated that bit-counting complexity classes form a useful intermediate framework for studying the relationship between non-deterministic and probabilistic counting complexity. However, there are still more avenues to explore. For instance, we were not able to prove much with regards to ${\bf B_{|0|=0}P}$  and ${\bf B_{|0|>0}P}$, other than to show that by definition they are equal to ${\bf C_{=}P}={\bf MNS}$ and ${\bf CoC_{=}P}={\bf CoMNS}$, respectively. Furthermore, we were also not able to show how the classical complexity classes ${\bf \oplus P}$ or ${\bf US}$ relate to any one of our bit-counting complexity classes. 

In addition, what is the upper bound for the parity based bit-counting complexity classes? Are complexity classes ${\bf B_{|0|\oplus}P}$ and ${\bf B_{|1|\oplus}P}$ contained in some level of the polynomial-hierarchy, ${\bf PH}$ \cite{S76}? Or do complexity classes ${\bf B_{|0|\oplus}P}$ and ${\bf B_{|1|\oplus}P}$ contain some level of {\bf PH} other than {\bf NP} and {\bf CoNP}? It would also be of interest to see if ${\bf B_{|0|\oplus}P} ^ {\bf B_{|0|\oplus}P}$ collapses to ${\bf B_{|0|\oplus}P}$ or if it gets more powerful as more ${\bf B_{|0|\oplus}P}$ oracles are added. Clearly, the same question could be asked about the complexity class ${\bf B_{|1|\oplus}P}$. Furthermore, can we show whether ${\bf B_{|1|\oplus}P}^{{\bf B_{|0|\oplus}P}} \subseteq {\bf B_{|0|\oplus}P}^{{\bf B_{|1|\oplus}P}}$ or ${\bf B_{|0|\oplus}P}^{{\bf B_{|1|\oplus}P}} \subseteq {\bf B_{|1|\oplus}P}^{{\bf B_{|0|\oplus}P}}$ hold? We can also create four different parity based bit-counting hierarchies as follows: 

-${\bf B_{|1|\oplus}H}$, like in ${\bf PH}$, we use ${\bf B_{|1|\oplus}P}$ and it's complement at each level.

-${\bf B_{|0|\oplus}H}$, like in ${\bf PH}$, we use ${\bf B_{|0|\oplus}P}$ and it's complement at each level.

-${\bf B_{|1|\oplus}AH}$, we start with ${\bf B_{|1|\oplus}P}$ and then alternate between ${\bf B_{|0|\oplus}P}$  and ${\bf B_{|1|\oplus}P}$ at each level. 

-${\bf B_{|0|\oplus}AH}$, we start with ${\bf B_{|0|\oplus}P}$ and then alternate between ${\bf B_{|1|\oplus}P}$  and ${\bf B_{|0|\oplus}P}$ at each level. 

Do hierarchies ${\bf B_{|1|\oplus}H}$, ${\bf B_{|0|\oplus}H}$, ${\bf B_{|1|\oplus}AH}$ and ${\bf B_{|0|\oplus}AH}$ contain ${\bf PH}$, contain each other or equal to each other? 

There is also the case of semantic \cite{P94} \cite{CP18} variants of our comparison and parity based bit-counting complexity classes. So instead of using $\Leftrightarrow$ in the definition of our complexity classes ${\bf B_{|0|=|1|}P}$, ${\bf B_{|0|>|1|}P}$, ${\bf B_{|0|<|1|}P}$, ${\bf B_{|0|\oplus}P}$ and ${\bf B_{|1|\oplus}P}$,  we can change it to $\Rightarrow$ and make that the acceptance criterion. Then for ${\bf B_{|0|=|1|}P}$, ${\bf B_{|0|<|1|}P}$ and ${\bf B_{|1|\oplus}P}$, we can make the rejection criterion zero accepting paths and for ${\bf B_{|0|>|1|}P}$ and ${\bf B_{|0|\oplus}P}$ we can make the rejection criterion one accepting path. This would yield natural semantic variants of our comparison and parity based bit-counting complexity classes and would result in the following five containments: 

-${\bf B_{|0|=|1|}SP} \subseteq {\bf B_{|0|=|1|}P}$

-${\bf B_{|0|<|1|}SP} \subseteq {\bf B_{|0|<|1|}P}$

-${\bf B_{|0|>|1|}SP} \subseteq {\bf B_{|0|>|1|}P}$

-${\bf B_{|0|\oplus}SP} \subseteq {\bf B_{|0|\oplus}P}$

-${\bf B_{|1|\oplus}SP} \subseteq {\bf B_{|1|\oplus}P}$

The important question we now have is whether any of the theorems we proved in this paper with regards to our bit-counting complexity classes also hold for their semantic variants. Can we also prove any containment amongst our semantic bit-counting complexity classes or any classical complexity class? 

One last point is that all of the recently discovered Mersenne primes have an odd number of 1's bits, so ${\bf B_{|1|\oplus}P} \cap {\bf B_{|0|=0}P}$ might be of an interest for exploration.

\bibliographystyle{alpha}
\bibliography{Bit_Counting_Complexity_Classes}

\end{document}